\begin{document}


\title{{Updateable Data-Driven Cardinality Estimator with Bounded Q-error}}



\author{Yingze Li}
\affiliation{%
  \institution{Harbin Institute of Technology}
  \country{China}
}
\email{23B903046@stu.hit.edu.cn}

\author{Xianglong Liu}
\affiliation{%
  \institution{Harbin Institute of Technology}
  \country{China}
}
\email{23S003029@stu.hit.edu.cn}

\author{Hongzhi Wang\footnotemark[1]}
\affiliation{%
  \institution{Harbin Institute of Technology}
  \country{China}
}
\email{wangzh@hit.edu.cn}

\author{Kaixin Zhang}
\affiliation{%
  \institution{Harbin Institute of Technology}
  \country{China}
}
\email{21B903037@stu.hit.edu.cn}

\author{Zixuan Wang }
\affiliation{%
  \institution{Harbin Institute of Technology}
  \country{China}
}
\email{2023113027@stu.hit.edu.cn}


\begin{abstract}
Modern cardinality estimators struggle with data updates. This research tackles this challenge within a single table.  We introduce ICE, an \uline{I}ndex-based \uline{C}ardinality \uline{E}stimator, the first data-driven estimator that enables instant, tuple-leveled updates.\looseness=-1

ICE has learned two key lessons from the multidimensional index and applied them to solve CE  in dynamic scenarios: (1) Index possesses the capability for swift training and seamless updating amidst vast multidimensional data.  (2) Index offers precise data distribution, staying synchronized with the latest database version. These insights endow the index with the ability to be a highly accurate, data-driven model that rapidly adapts to data updates and is resilient to out-of-distribution challenges during query testing. To enable a solitary index to support CE, we have crafted specific algorithms for training, updating, and estimating. Furthermore, we have analyzed the unbiasedness and variance of the estimation results.\looseness=-1

Extensive experiments demonstrate the superiority of ICE. ICE offers precise estimations and fast updates/constructions across diverse workloads.  Compared to state-of-the-art real-time query-driven models, ICE boasts superior accuracy (2-3 orders of magnitude more precise), faster updates ($4.7-6.9\times$ faster), and significantly reduced training time (up to 1-3 orders of magnitude faster).\looseness=-1

\end{abstract}

\begin{CCSXML}
<ccs2012>
   <concept>
       <concept_id>10002951.10002952.10003190.10003192.10003210</concept_id>
       <concept_desc>Information systems~Query optimization</concept_desc>
       <concept_significance>500</concept_significance>
       </concept>
   <concept>
       <concept_id>10002951.10002952.10002971.10003450.10002974</concept_id>
       <concept_desc>Information systems~Multidimensional range search</concept_desc>
       <concept_significance>300</concept_significance>
       </concept>
 </ccs2012>
\end{CCSXML}

\ccsdesc[500]{Information systems~Query optimization}
\ccsdesc[300]{Information systems~Multidimensional range search}
\keywords{Cardinality Estimation}
\received{20 February 2007}
\received[revised]{12 March 2009}
\received[accepted]{5 June 2009}

\maketitle

\footnotetext[1]{Corresponding author.}

\section{Introduction}

{C}ardinality {e}stimation (CE) is crucial for query optimization~\cite{thuBaseline}, predicting query selectivity without execution. Despite its importance, unresolved issues persist, causing estimation errors up to $10^4$ in both open-source and commercial systems~\cite{AreWeReady4CE}. These errors can deteriorate query plans and database performance~\cite{DBMSCE}. Thus, addressing CE is vital for optimizing query execution and enhancing database performance.

To tackle CE, data-driven~\cite{yang2019deep,hilprecht2019deepdb,CardIndex} and query-driven methods~\cite{MSCN_Kipf2018LearnedCE,ALECE} have emerged. Data-driven methods learn joint data distributions for accurate estimations, while query-driven models learn query to cardinality mappings. Data-driven models, without prior workload knowledge, excel in generalizing to unknown queries, promising broader applications.

However, both data-driven and query-driven paradigms face a common Achilles' heel, \textbf{data updates}. For query-driven models, the cardinalities of all known training queries may change after data updates, necessitating the re-execution of queries in the training set to obtain the true cardinalities and retraining the model on the new workload. For example, as shown in Table~\ref{Tab.T01}, MSCN~\cite{MSCN_Kipf2018LearnedCE} has to re-execute the query and be retrained after each data update, which results in a huge overhead  $2.5 \times 10^9 \mu s$ per update. For data-driven models, whenever the original data is updated, the joint distribution of the relational table changes accordingly~\cite{ALECE}. Compared with the traditional estimators like histograms~\cite{ViswanathPoosala1996ImprovedHF}, the learned data-driven models require a much slower finetuning process to adapt to the new database state~\cite{ALECE,AreWeReady4CE}. Although this process is prolonged, updating these models in a dynamically changing environment is indispensable. Using the stale model for prediction or estimation may result in errors as high as $10^3$ or even more, as evidenced by the stale Naru and MSCN's max Q-error in Table~\ref {Tab.T01}. This level of error cannot be overlooked. These challenges make it difficult for these models to adapt to real-world scenarios with real-time data updates.\looseness=-1


\begin{table*}[]
\caption{Modern cardinality estimators in dynamic environments. For models that update slowly, we use $\mathcal{N}$ to denote the \uline{N}ew fine-tuned model and $\mathcal{S}$ to denote the \uline{S}tale model. We report the result on the Insert-Heavy workload of the DMV~\cite{DMV} dataset. More details can be found in Section~\ref{Sec.EstEva}.}
\label{Tab.T01}
\small
\begin{tabular}{l||l|l|l|l|l}
\textbf{}& ICE(Ours) & MSCN & Naru & CardIndex & ALECE  \\ \hline
Model& Index Only & Conv Network & AR Network & AR Network + Index & Transformer + Hist  \\
Type& Data-driven & Query-driven & Data-driven & Data-driven & Query-driven  \\
Training Time & $ \textbf{13.8}s $ & $76(Train)+2514(Label)s$ & $ 2972s $ & $ 112(Deep) + 15.9(Index)s$ & $69(Train)+2514(Label)s$ \\
Update Time &  $ \textbf{6.2} \mu s/ tuple$ & $2.5 \times 10^9 \mu s/update $  & $453 \mu s/ tuple$ & $36.7(Deep) + 8.1(Index) \mu s/tuple$ & $19 \mu s/ tuple $     \\

Inference Time & $8.12 ms$ &  {$\textbf{0.71} ms$} & $17.6 ms$ & $8.46 ms$ & $1.51 ms$  \\
QMAX &  {$\textbf{12}$} &$  1\times10^3(\mathcal{N}) | 4\times10^3(\mathcal{S})$ & $23(\mathcal{N}) | 6\times10^3(\mathcal{S})$ & $3\times 10^3$ & $525$ 

\end{tabular}
\end{table*}



In order to deal with the CE  in dynamic scenarios, attempts have been made using the query-driven paradigm. ALECE~\cite{ALECE}, a query-driven model, utilizes transformers to learn the correlation between histogram and query representations. When data is updated, ALECE modifies the histogram representation without fine-tuning the transformer model. Although ALECE utilizes coarse data features such as histogram representation, when there is a certain distance between the query distribution in the testing set and that in the training set, ALECE still makes a significant prediction error over a magnitude of $525$, as shown in Table~\ref{Tab.T01}. This means its generalization ability on unseen queries still cannot reach the same level as data-driven models. 


The fundamental reason why existing deep data-driven models cannot effectively support data updates in databases lies in \textbf{the inefficiency of neural networks} in storing or learning the representations of database tuples.   For a trained neural network, the representation of a single data tuple is dispersed across every learned network parameter in each network layer.   Suppose that we want the network to "insert" new tuples or "forget" old ones. In this case, we have to iteratively perform gradient descent on each parameter in the network and update the parameters one by one~\cite{unlearn}.   This issue results in inefficient updates, and even with the parallel computing power of hundreds or thousands of cores on modern GPU hardware. As shown in Table~\ref{Tab.T01}, Naru's update latency is still only 453 microseconds per updated tuple. In contrast, the index stores different tuples in the leaf slots of corresponding subtrees, maintaining mutual isolation. For each insertion/deletion/modification operation, only local parameters of the scale of $log(N)$ will be modified in the subtrees of the model, where $N$ is the total size of the data. In contrast, the parameters in the rest of the model remain unchanged. Therefore, the index is highly efficient and widely deployed in updates for many dynamic database scenarios~\cite{jensen2004query,achakeev2013efficient,wu2010efficient}.



\textcolor{black}{We observe that index structure serve as an efficient model for both preserving lossless data distribution and supporting rapid data updates. These characteristics have sparked our reflection}: \textit{\textbf{Is it possible to leverage existing mature index structures in databases to create a fully index-driven cardinality estimator?}} Regarding estimation accuracy, as the index fully preserves the lossless distribution of data, we believe that such a cardinality estimator will be able to achieve high-precision predictions of query selectivity. Furthermore, in terms of update speed, given the rapid development of index technology over the past four decades~\cite{1979B+tree,graefe2011modern,kraska2018case,lmsfc}, we can expect this cardinality estimator to enable fast and efficient updates of data distributions, thus significantly enhancing the performance of database queries.

\textcolor{black}{However, relying solely on index structure in data-driven CE poses a challenge. It is difficult to imagine how an index can trim the sampling space online like an \uline{A}uto\uline{R}egressive(AR) model and achieves high estimation efficiency. Existing index-supported cardinality estimators like CardIndex~\cite{CardIndex} try a compromise by "gluing" AR network with learned index, but this falls short in accuracy and efficiency. CardIndex's accuracy suffers as it relies on neural networks for high-cardinality queries, but limited model parameters yield inaccurate probability density predictions, causing significant errors (see Table~\ref{Tab.T01}). In terms of efficiency, every database update requires a slow, thorough update of the neural network, which significantly slows down CardIndex's training and updating process, with neural network training and fine-tuning accounting for most of the time (see Table~\ref{Tab.T01}).}\looseness=-1


\textcolor{black}{To address these challenges, we draw inspiration from existing multidimensional indexing works~\cite{tropf1981multidimensional,lmsfc} by leveraging  data-skipping technique of multidimensional indexes to enhance sampling efficiency. We first utilize the data-skipping technique to filter the query space.} Then, we use the index to convert the filtered subregions into a compact rank space and sample points on the rank space.
At last, we use the index again to map those sampled values back to the original tuple representations and aggregate the results. In essence, we have devised an utterly index-driven CE methodology that attains excellent estimation accuracy and seamlessly facilitates instant updates encompassing insertions, deletions, and modifications. All the update operations can be accomplished within $O(log(N))$ time.\looseness=-1

The contributions of the paper are summarized below:

\textbf{S1.} We proposed ICE, an \uline{I}ndex-based \uline{C}ardinality \uline{E}stimator(ICE)  (Section~\ref{Sec.DetailedLayout}). It is the first high-precision, data-driven learned structure that supports instant data insertion/deletion/modification.

\textbf{S2.} We designed efficient bulk-loading  (Section~\ref{Sec.bulk-loading}), updating  (Section~\ref{Sec.CURD}) algorithms for ICE, enabling ICE to train rapidly from massive data and update quickly in dynamic scenes.\looseness=-1

\textbf{S3.} We designed an efficient CE algorithm based on ICE   (Section~\ref{Sec.CEALG}). The core idea is to sample in the filtered latent space, i.e., the rank space. We also analyzed the unbiasedness and variance of the method as well as the probability of the algorithm's estimation exceeding the preset maximum Q-error when predicting the cardinality of low-cardinality queries  (Section~\ref{Sec.Analysis}). By executing these low-cardinality queries with a fast index scan, we can bound the Q-error to any user-specified requirements. \looseness=-1

\textbf{S4.} Extensive experiments have demonstrated the superiority of  ICE  (Section~\ref{sec.EXP}). ICE has achieved rapid updates and accurate estimation in multiple datasets and various dynamic scenes. Compared to SOTA real-time query-driven models, ICE boasts 2-3 orders of magnitude higher accuracy, $4.7-6.9\times$ faster updates, and training time expedited by up to 1-3 orders of magnitude.

Due to space limitations, the scope of discussion in this work will be limited to the multidimensional CE  task within a single table. Adapting the multidimensional cardinality estimator from a single table to multiple tables is straightforward. We can perform a similar transformation like CardIndex~\cite{CardIndex}, which samples from the full outer join of all tables and then builds a multidimensional index on the materialized full outer join sample.



\section{Related Work}


\textbf{Modern cardinality estimators}:  Modern cardinality estimators can be broadly categorized into two paradigms: query-driven~\cite{LWNN,MSCN_Kipf2018LearnedCE,ALECE} and data-driven\cite{yang2019deep,hilprecht2019deepdb,CardIndex}. Query-driven estimators employ neural networks like MLP~\cite{LWNN}, CNN~\cite{MSCN_Kipf2018LearnedCE}, and Transformer~\cite{ALECE} to build regression models that predict query cardinalities by learning mappings from query representations to true cardinalities. Two challenges persist in this paradigm: data updates (necessitating re-acquiring training labels and retraining models after update~\cite{AreWeReady4CE}) and workload shifts (unseen workload patterns in test queries comparing to training data~\cite{yang2019deep}). To tackle data updates, ALECE~\cite{ALECE} introduces additional histogram features and leverages attention mechanisms to learn correlations between query representations and histogram encodings. Since histogram features can be swiftly updated, ALECE circumvents the need to re-acquire true cardinalities and retrain based on them, enhancing update speeds for query-driven models. On the other hand, data-driven estimators learn joint data distributions, utilizing statistical inference on \uline{S}um-\uline{P}roduct \uline{N}etwork (SPN)~\cite{hilprecht2019deepdb} or sampling on deep model~\cite{yang2019deep} to estimate cardinalities. The former often lags in estimation accuracy due to assumptions of independence among data attributes. Data-driven estimators are inherently robust against workload shifts~\cite{yang2019deep,AreWeReady4CE}. However, efficiently updating data-driven models remains challenging: SPN requires reconstruction when new correlations appears~\cite{hilprecht2019deepdb}, and deep models suffer from inefficient tuple representation storage in neural networks, necessitating entire network parameter updates via gradient descent~\cite{yang2019deep}. CardIndex~\cite{CardIndex} mitigates update issues by stacking a lightweight AR network with a learned index, delegating difficult small-cardinality queries to the index to search, and reserving neural networks for estimating large-cardinalities. This reduces the network size and enhances training and update speed. Nevertheless, CardIndex's reliance on AR networks at its root node means that it cannot fundamentally resolve data update challenges.

\textbf{Z-ordered multidimensional index}: To index multidimensional data, Z-ordering curves are commonly employed, mapping data to one dimension for indexing~\cite{lmsfc,ramsak2000integrating,jensen2004query,tropf1981multidimensional}. Handling range queries based on Z-order involves calculating $Z_{min}$ and $Z_{max}$ for the query box and scanning data within. However, this includes non-relevant data, causing gaps. Tropf et al~\cite{tropf1981multidimensional} proposes \textit{getBIGMIN} and \textit{getLITMAX} methods to skip these gaps efficiently in $O({n})$ time, where $n$ is the bit length. LMSFC~\cite{lmsfc} further optimizes Z-order indexing by switching bit order and leveraging optimal-1 split for range query performance improvements based on historical workloads.\looseness=-1

\section{Preliminary}

In this section, we will introduce some basic concepts of CE  and multidimensional indexing  (Section~\ref{sec.baseicConcepts}) and revisit the index structure from the perspective of CE (Section~\ref{sec.revist}).

\subsection{Basic Concepts}
\label{sec.baseicConcepts}
For a relational table $T$ containing $N$ tuples and $m$ attributes $\{A_1, A_2, \\ \dots, A_m\}$, where each attribute is encoded using $\beta$ binary bits, i.e., $A_i = \mathcal{B}_{i1}\mathcal{B}_{i2}\dots \mathcal{B}_{i\beta}$. \textcolor{black}{A query predicate $\theta$ can be viewed as a function that takes a tuple $t \in T$ as input and returns $\theta(t)=1$ if $t$ satisfies the predicate's conditions, and $\theta(t)=0$ otherwise.} The set of tuples in table $T$ that satisfy the predicate $\theta$ comprises the result set \textcolor{black}{$R = \{ t \in T : \theta(t) = 1 \}$}. \textcolor{black}{To expedite the retrieval of the result set $R$, an index relies on an auxiliary structure. Furthermore, CE necessitates the computation of the size of the result set, which is denoted by \textcolor{black}{$card(\theta) = |R|$}.}

The \uline{P}robability \uline{D}ensity \uline{F}unction(PDF) of a tuple $t$  reflects its proportion in the relational table $PDF(t) = o(t)/N$  where $o(t)$ represents the frequency of $t$ in table $T$. It is inherently linked to the query selectivity $sel(\theta)$, Consequently,

\begin{equation}
sel(\theta) = \sum_{t \in T} \theta(t) \times PDF(t)
\end{equation}

Given a table $T$ sorted under a specific bitwise ordering $\Omega$, expressed as $\Omega = \langle \mathcal{B}_{x_1}, \mathcal{B}_{x_2}, \ldots, \mathcal{B}_{x_{m\times \beta}}  \rangle$, the relationship $t_2 < t_1$ holds if and only if there exists a value of $\kappa \in [1,m\times\beta]$ such that $t_2.\mathcal{B}_{x_\kappa} < t_1.\mathcal{B}_{x_\kappa}$ while for all $j < \kappa$, we have $t_2.\mathcal{B}_{x_j} = t_1.\mathcal{B}_{x_j}$.

 Under such particular ordering $\Omega$, and the rank of tuple $t$, $r_t$ needs to count all the tuples whose keys are smaller than $t$.

\begin{equation}
r_t= \sum_{t_i <  t} o(t_i) =  \sum_{t_i <  t}  PDF(t_i)\times N
\end{equation}

The \uline{C}umulative \uline{D}istribution \uline{F}unction(CDF) is the result of scaling the rank of $t$ by $N$ times, namely $CDF(t) = r_t/N$.

To process multidimensional data more efficiently, we adopt the Z-order bit
ordering convention $\Omega_Z$, which is denoted as: $\Omega_Z =  \langle \mathcal{B}_{11}, \mathcal{B}_{21}, \ldots, \mathcal{B}_{{m1}}, \mathcal{B}_{12}, \mathcal{B}_{22}, \ldots, \mathcal{B}_{{m2}}, \ldots \mathcal{B}_{1\beta}, \mathcal{B}_{2\beta}, \ldots, \mathcal{B}_{{m\beta}}  \rangle$. To simplify the notation, we use $n$ to replace $m\beta$ in representing the total coding length, set the ordering notation as $\Omega_Z=\left \langle Z_1, Z_2, \dots Z_n \right \rangle$,  and denote 
 tuple $t$ under Z-ordering as $t=(z_1, z_2, \dots z_n) $.\looseness=-1

\subsection{Revisiting Index Structures from CE Perspectives }
\label{sec.revist}
An index is a precise learned model that takes the query key as the input and efficiently predicts its CDF~\cite{kraska2018case}. Theoretically, the PDF of a tuple is the derivative of its CDF (rank) as follows.

\begin{equation}
{PDF}(t) = \frac{d}{dt} {CDF}(t)   = \frac{1}{N} \times \frac{d}{dt} r_t
\label{eq:pdf_cdf_derivative}
\end{equation}

In practice, we can avoid the differentiation operation to obtain the PDF more ingeniously by utilizing an additional counter in leaf nodes of the index to maintain the frequency of tuples $o(t)$. And we use $PDF(t) = o(t)/N$ to obtain the PDF.

Therefore, not only limited to CardIndex~\cite{CardIndex}, any index can implement CDF-call and PDF-call within one point query.

\begin{figure}[htbp]
	\centering
	\includegraphics[width=8.5cm]{./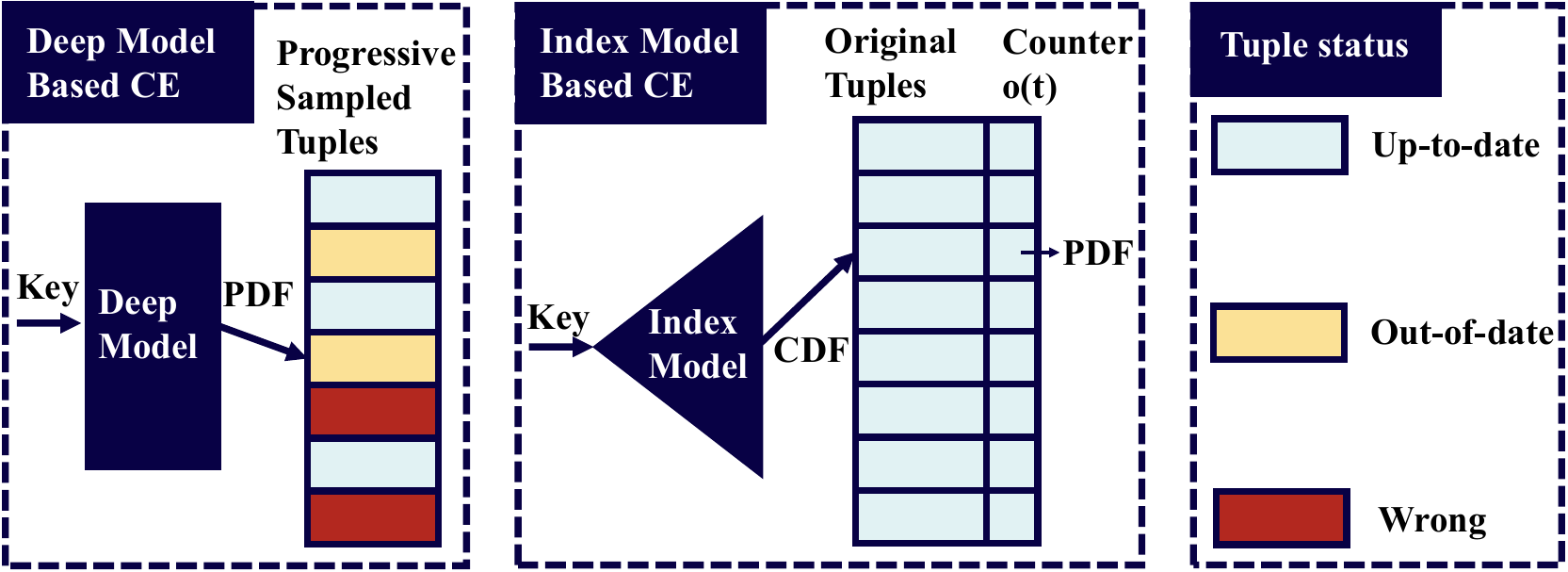}
	\caption{Deep network vs. index in obtaining PDF}
	\label{P1}
\end{figure}

From the perspective of CE, compared with existing deep models, the PDF obtained by index has at least two advantages over that obtained by deep learning models:\looseness=-1

\textbf{S1.Accurate}: The PDF information obtained through the index is lossless and will not cause prediction errors due to insufficient model training or too many distinct values in columns.

\textbf{S2.Fresh}: For updates in the database, we can directly perform physical insert/delete/modification on the index, ensuring that the model is \textbf{always up-to-date} when outputting PDF for estimation. However, deep models require an additional and slow fine-tuning process to synchronize the model with the updated database.\looseness=-1

\section{Index for Cardinality Estimation (ICE)}

\textcolor{black}{We observe that for updates, B+-tree and similar tree-shaped indexes can efficiently fine-tune their parameters within $log(N)$ complexity, making them well-suited for handling massive dynamic multidimensional data~\cite{graefe2011modern,LIPP}. In this section, we elaborate on adapting these prevalent structures for CE. The ICE structure is shown in Figure~\ref{P2}. It is similar to a B+-tree but maintains additional lightweight counters to assist CE. In this section, we introduce the layout of ICE in Section~\ref{Sec.DetailedLayout} and describe the bulk-loading algorithm in Section~\ref{Sec.bulk-loading}. We also elaborate on the support of ICE for real-time insertion, deletion, and modification in Section~\ref{Sec.CURD}. Finally, we leverage ICE in Section~\ref{Sec.Tup2CDF} to achieve the bijection from the key space to the rank space within $log(N)$ time, facilitating the subsequent CE task.}\looseness=-1

\begin{figure}[htbp]
	\centering
	\includegraphics[height=3.5cm]{./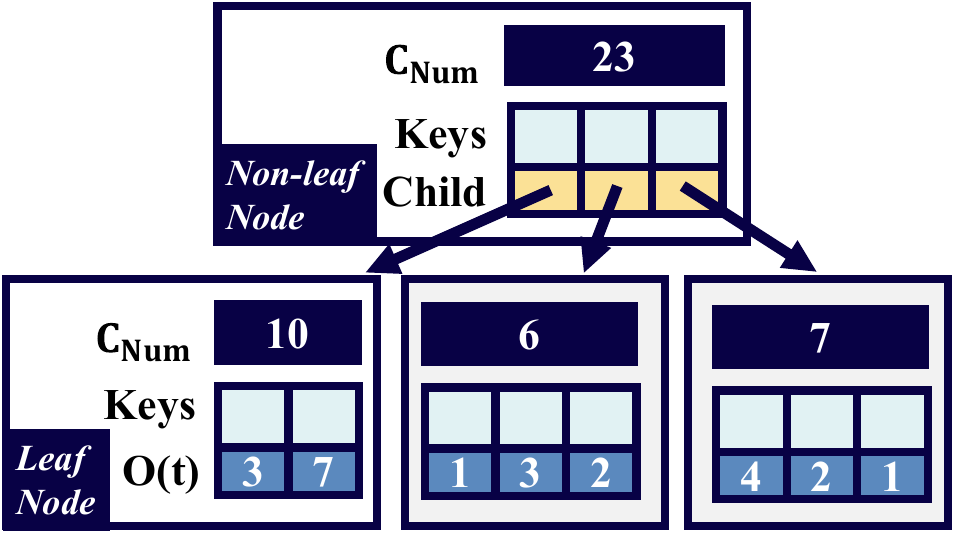}
	\caption{Layout of ICE}
	\label{P2}
        \vspace{-1em}
\end{figure}

\subsection{Index Layout}
\label{Sec.DetailedLayout}

\textcolor{black}{To make the estimator achieve $log(N)$ time update, we choose a tree-based index layout similar to the B+-tree~\cite{graefe2011modern,LIPP}. We also maintain two lightweight counters, $C_{Num}$ and $o(t)$ to guarantee that we can implement a fast bidirectional mapping from the Z-order representation of a tuple (key space) to the rank of a tuple (rank space) in Section~\ref{Sec.Tup2CDF}.  We define these two counters as follows.}\looseness=-1

\textbf{Tuple frequency counter $o(t)$.} For a given tuple $t$ stored in the leaf node, we maintain a counter $o(t)$ to get the frequency of tuple $t$. From $o(t)$, we can easily derive the PDF of tuple $t$  by the formula $PDF(t) = o(t)/N$ where $N$ is the total data size. This counter avoids additional derivative operations or any neighboring scanning.\looseness=-1

\textbf{Node cover counter $C_{Num}$.} This counter maintains how many tuples the current node will eventually cover at the leaf node. This counter is defined recursively as follows. For a Non-leaf node $N_0$, $N_{0}.C_{Num} = \sum_{N_i \in N_0.Child} N_{i}.C_{Num} $, and for a leaf node $N_1$, $N_{1}.C_{Num} = \sum_{t_j \in N_1.Keys} t_i.o(t) $. The purpose of maintaining this counter is to convert the index structure into a bidirectional mapping between the tuple's rank and the tuple's key in $log(N)$ time. Moreover, we hope that this counter can be properly isolated from the index subtree that has been changed.  This counter maintains the consistency of the bidirectional mapping before and after the update and makes the update algorithm more efficient.\looseness=-1

Apart from that, we designed the index structure similar to the B+-tree for simplicity and efficiency. Despite the significant progress in the learned index~\cite{kraska2018case,LIPP}, we still adopt this "retro" indexing implementation. The reason for using a B+-tree-style structure is that, with the help of the additional $C_{Num}$ counter and the point query mechanism of B+-tree, we can easily achieve bidirectional mapping between tuple's rank and representation in $log(N)$ time. This enables us to perform effective sampling within the rank space. In contrast, the linear functions in the learned index can only handle the injection from the tuple's key to their ranks. To support bidirectional mapping, we have to maintain an additional learned index to learn the mapping from data ranks to their keys, which is redundant and inefficient.\looseness=-1 

\subsection{Bulk-loading}
\label{Sec.bulk-loading}

Inspired by existing indexes~\cite{1979B+tree,graefe2011modern}, we propose the bulk-loading algorithm of ICE to achieve high-throughput training from massive multidimensional data. The basic idea is to sort the tuples according to their Z-values, scan layer by layer from the bottom to the top, and learn the local parameters corresponding to each layer of nodes.
The algorithm is shown in Algorithm~\ref{alg.bulkload}:\looseness=-1 

\begin{algorithm}[htb]
	\caption{ Bulk-loading.}
	\label{alg.bulkload}
	\begin{algorithmic}[1] 
		\Require
		{ Table $T$; Node fanout number: $n_f$ }
		\Ensure
		ICE structure $ICE$;
        \State {$ keys = Sort(T)$;}
		\State {$ \ell=0; s = keys.size(); levels = \phi $;}
        \While{$s > 1$} 	
        \Comment{Bottom-up training}	
        \State{$nodes = \phi; curNode = NewNode();$}
        \For{ $k\in keys$  }
        \If{$\ell=0$} \Comment{Leaf nodes}	
        \State{$curNode.C_{Num} = curNode.C_{Num} + k.o(t);$}
        \Else \Comment{Nonleaf nodes}	
        \State{$curNode.C_{Num} = curNode.C_{Num} + k.C_{Num};$}
        \EndIf
       \State{$curNode.append(k);$}
        \If{$ curNode.size() \geq n_f$}
        \State{$nodes.append(curNode);curNode = NewNode(); $}
        \EndIf
        \EndFor
        \State $keys= [ nodes[i].tupFirst, i\in \{ 0,1,\dots s-1\} ] $;
        \State{$levels[\ell] = nodes; \ell++; s=keys.size();$}

        \EndWhile
        \State{$ICE = newICE(levels);$}

		\\
		\Return {$ICE$}; 
	\end{algorithmic}
\end{algorithm}

We first sort the tuples in table $T$ under the Z-order  (line~1)  and then initialize the variables  (line~2). Subsequently, we train the ICE model bottom-up  (lines~3-14). Specifically, we maintain the local cumulative information of $C_{Num}$ layer by layer  (lines~5-9) and insert the keys from the lower layer into the nodes of the upper layer  (lines~10-12). Finally, we aggregate the layer-leveled trained information into the ICE model  (line~15) and return it.

\textbf{Complexity analysis.} \textcolor{black}{Given that each node of ICE is scanned only once during the bottom-up construction, the complexity of ICE's bottom-up process  (lines~3-14) is $O(N)$, where $N$ denotes the size of the data. Considering that the complexity of sorting the data in line~1 is in  $O(N\times log(N))$, the overall time complexity of our construction algorithm is $O(N\times log(N))$, which is determined primarily by the complexity of sorting.
Meanwhile, the bottom-up construction guarantees the depth of ICE to be $O(log(N))$.}\looseness=-1

Compared to the existing deep network's training process, the bulk-loading process of ICE is efficient. Considering that the main bottleneck of deep network training lies in the inefficiency of gradually iterating through small batches of data to compute gradients and perform backpropagation for the neural network. In contrast, the lion's share of ICE's training time lies in the $O(N \times log(N))$ complexity required for sorting multidimensional data. Specifically, if the input data is sorted, the bulk-loading algorithm mentioned above will achieve $O(N)$ performance.\looseness=-1



\subsection{Index Maintenance}
\label{Sec.CURD}

In this section, we describe instant tuple-leveled insertion, deletion, and modification on ICE within $log(N)$ time.

\begin{figure}[htbp]
	\centering
	\includegraphics[width=6.5cm]{./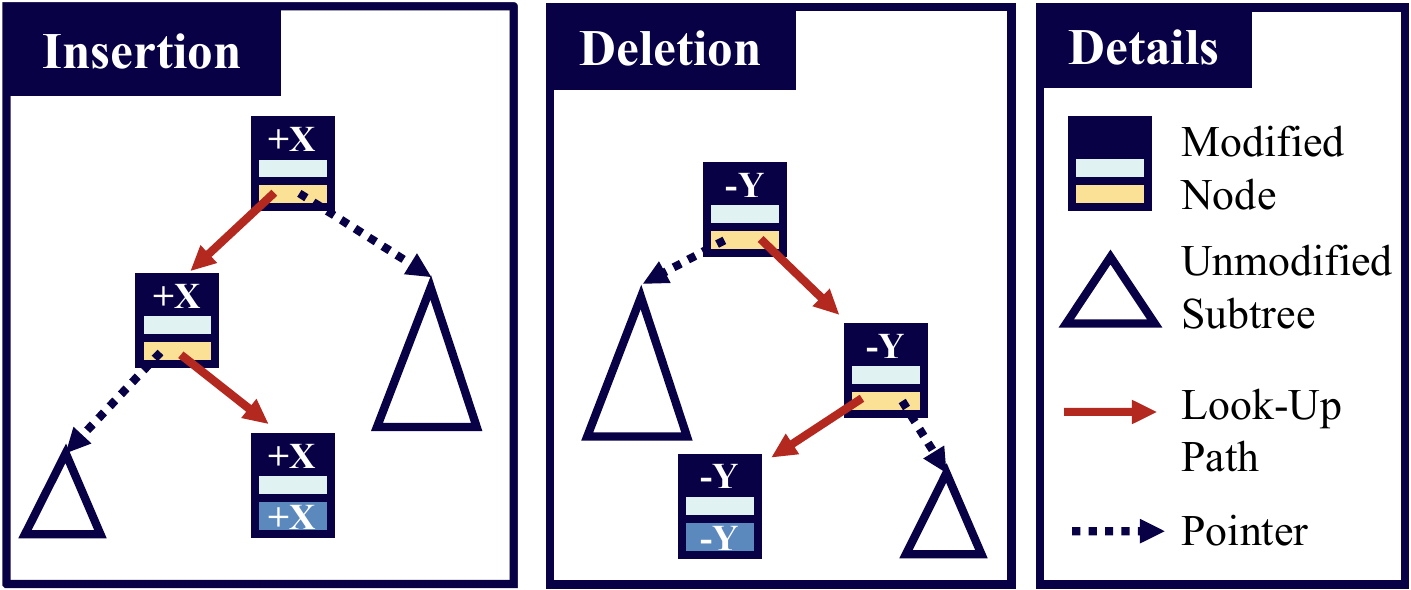}
	\caption{Insertion and deletion of ICE }
	\label{P3}
        \vspace{-1em}
\end{figure}

\textbf{Insertion and Deletion.} For insertion and deletion, as Figure~\ref{P3} shows, we first recursively look up and locate the leaf nodes that require updating and then conduct updates on the leaf nodes. If the leaf node stores the key to update, we perform arithmetic addition or subtraction on the leaf node's counter $o(t)$. If there is no available key for insertion, we allocate a new slot for the corresponding data slot address of the leaf node. If the counter indicates one during deletion, we remove such a slot from the leaf node. Subsequently, for each node on the search path, we adjust the value of its child node counter $C_{Num}$. If a node reaches a state requiring adjustment, we apply a strategy similar to that of B+-tree adjustment while maintaining the nature of the counter.\looseness=-1

\textbf{Modification.} For tuple-leveled modification, we split the operation into one deletion to the old tuple and one insertion on the new tuple. The deletion and insertion operations above ensure that node updates can be accomplished within $log(N)$ time, and the update to the counter is limited to the scale of $log(N)$. Therefore, the modification can be finished within $log(N)$ time.


\subsection{Bijection between Keys and Ranks}
\label{Sec.Tup2CDF}

In this section, we will discuss how to achieve a bidirectional mapping between the tuple representation under Z-order encoding(i.e., index key) and its corresponding rank value with the aid of ICE. The key idea of  Key2Rank mapping leverages the point query of a B+-tree, gradually accumulating the sizes of subtrees along the path of the point query to obtain the rank value of the key. Conversely, the mapping process from rank space to key space is the reverse process of Key2Rank, utilizing the accumulated subtree sizes on the scanning path to perform corresponding pruning.\looseness=-1

\begin{algorithm}[htb]
	\caption{Bijection between keys and ranks}
	\label{alg.Bijection}
	\begin{algorithmic}[1] 
		\Require
		{ICE structure $ICE$; }
		\Ensure
		{The bijection between tuple $t$'s key $k_t$ and rank $r_t$;}

        \Procedure{Key2Rank }{{$ ICE, k_t $}} \Comment{Key to rank space mapping}
        \State $r_t=0;curNode = ICE.root;$
        \While{$!curNode.isleaf$}
        \For{$N_i \in curNode.Child $}
        \If{$N_i. key < k_t$}
            \State{$r_t = r_t + N_i. C_{Num};$}
        \ElsIf{$N_i. key = k_t$}
        \State $curNode = N_i;$
        \Else
        \State{$curNode = N_i.Pre; r_t = r_t- curNode. C_{Num};$}
        \EndIf
        \EndFor
        \EndWhile
        \For{$t_i \in curNode.Keys $}
        \If{$t_i.key \leq k_t$}
            \State{$r_t = r_t + t_i.o(t);$}
        \EndIf
        \EndFor
		\Return {$r_t$}; 

        \EndProcedure
        \Procedure{Rank2Key }{{$ ICE, r_t $}} \Comment{Rank to key space mapping}
        \State $curRank=0;curNode = ICE.root;$
        \While{$!curNode.isleaf$}
        \For{$N_i \in curNode.Child $}
        \If{$curRank + N_i.C_{Num} \leq r_t$}
            \State{$curRank = curRank + N_i.C_{Num};$}
        \Else
        \State $curNode = N_i; break;$
        \EndIf
        \EndFor
        \EndWhile
        \State{$k_t = 0;$}
        \For{$t_i \in curNode.Keys $}
        \If{$curRank + t_i.o(t)< r_t$}
            \State{$curRank = curRank + t_i.o(t); k_t = t_i.key;$}
        \EndIf
        \EndFor
		\Return {$k_t$};
        \EndProcedure
	\end{algorithmic}
\end{algorithm}

In Algorithm~\ref{alg.Bijection}, we show the bidirectional mapping algorithm between index keys and key ranks. Specifically, this algorithm can be divided into two procedures: Key2Rank mapping  (lines~1-13) and Rank2Key mapping  (lines~14-25). These two procedures share similarities in their operations and can both be viewed as extensions of point queries in B+-trees. For  Key2Rank mapping, it progressively accumulates the sizes of the traversed subtrees  (line~6 and line~13) along the search path for a key $k_t$   (lines~3-13) in ICE, ultimately obtaining the rank value $r_t$ corresponding to the target tuple. In contrast, Rank2Key mapping searches for the target tuple's presence within the subtree associated with the current node based on the $C_{Num}$ information of that node and the accumulated rank sum $curRank$ traversed so far  (lines~16-21). If the target falls within the range, it switches to the corresponding subtree for further search  (line~21). Given that the depth of the ICE index is $O(log(N))$, the time complexity of both search procedures above is $O(log(N))$ in terms of search complexity.\looseness=-1

\begin{figure}[htbp]
	\centering
	\includegraphics[width=7.5cm]{./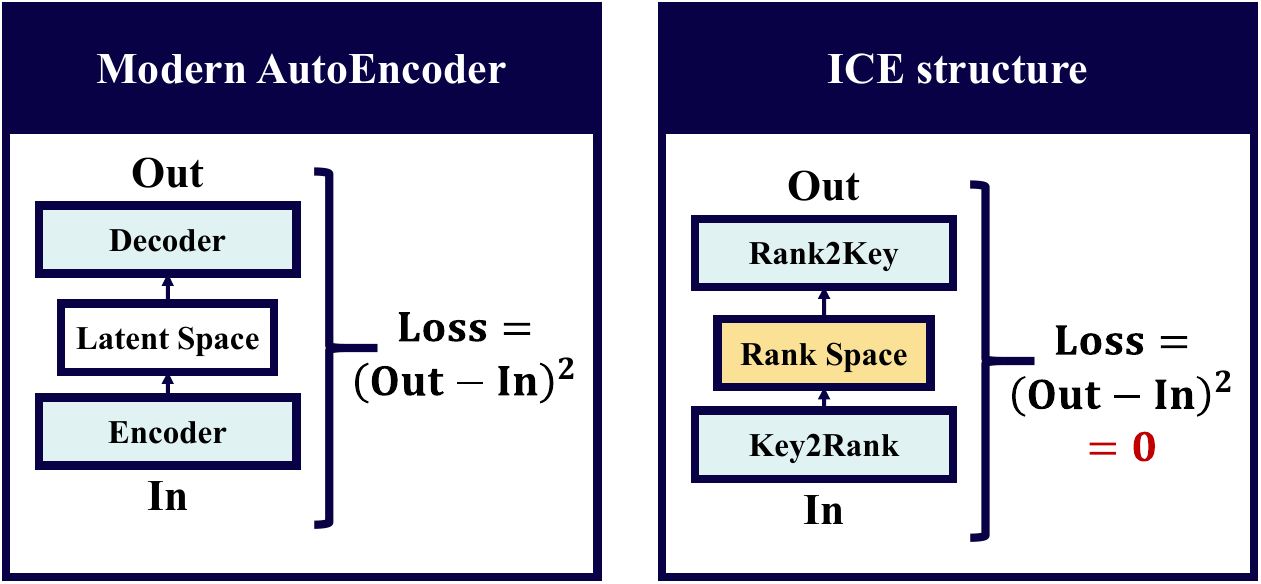}
	\caption{ICE vs. existing AutoEncoder}
	\label{P5}
        \vspace{-1em}
\end{figure}

\textbf{Discussions.} From a modern perspective, the proposed bidirectional mapping transforms the traditional index structure into an AutoEncoder with 0 loss(see Figure~\ref{P5}). In other words, given a tuple to be queried, ICE can encode it into a representation between $1$ and $N$. Meanwhile, given any positive integer from $1$ to $N$ in the rank space, 
ICE can decode and rebuild it into a tuple currently in the data table. Compared to existing AutoEncoders, ICE's latent space, namely the rank space, ensures data compactness and integrity.  That is, every tuple in the relational table corresponds one-to-one to the integer ranks in the space, and this bijection will always be consistent with the latest state of the relational table. These advantages shall thereby facilitate subsequent CE algorithms.\looseness=-1

\section{Cardinality Estimation}

In this section, we leverage the extra counters and the bijections maintained in the previous section to transform the ICE into a zero-errored AutoEncoder and sample in its latent space to accomplish CE. As shown in Figure~\ref{P4}, we first pre-filter the coarse query space by utilizing the data-skipping technique under Z-order. Subsequently, we map the filtered query space into a compact rank space using an index structure and perform efficient sampling within such rank space. Finally, we utilize the index to losslessly restore the sampled values in the rank space back into tuples, perform the final filtering in conjunction with the query, and aggregate the results. In Section~\ref{Sec.KeyFilter}, we discuss how to use the data-skipping technique to enhance the sampling efficiency. Then, in Section~\ref{Sec.CEALG}, we present our CE algorithm. Lastly, in Section~\ref{Sec.Analysis}, we conduct the  analysis on the sampling algorithm.

\begin{figure}[htbp]
	\centering
	\includegraphics[width=7.5cm]{./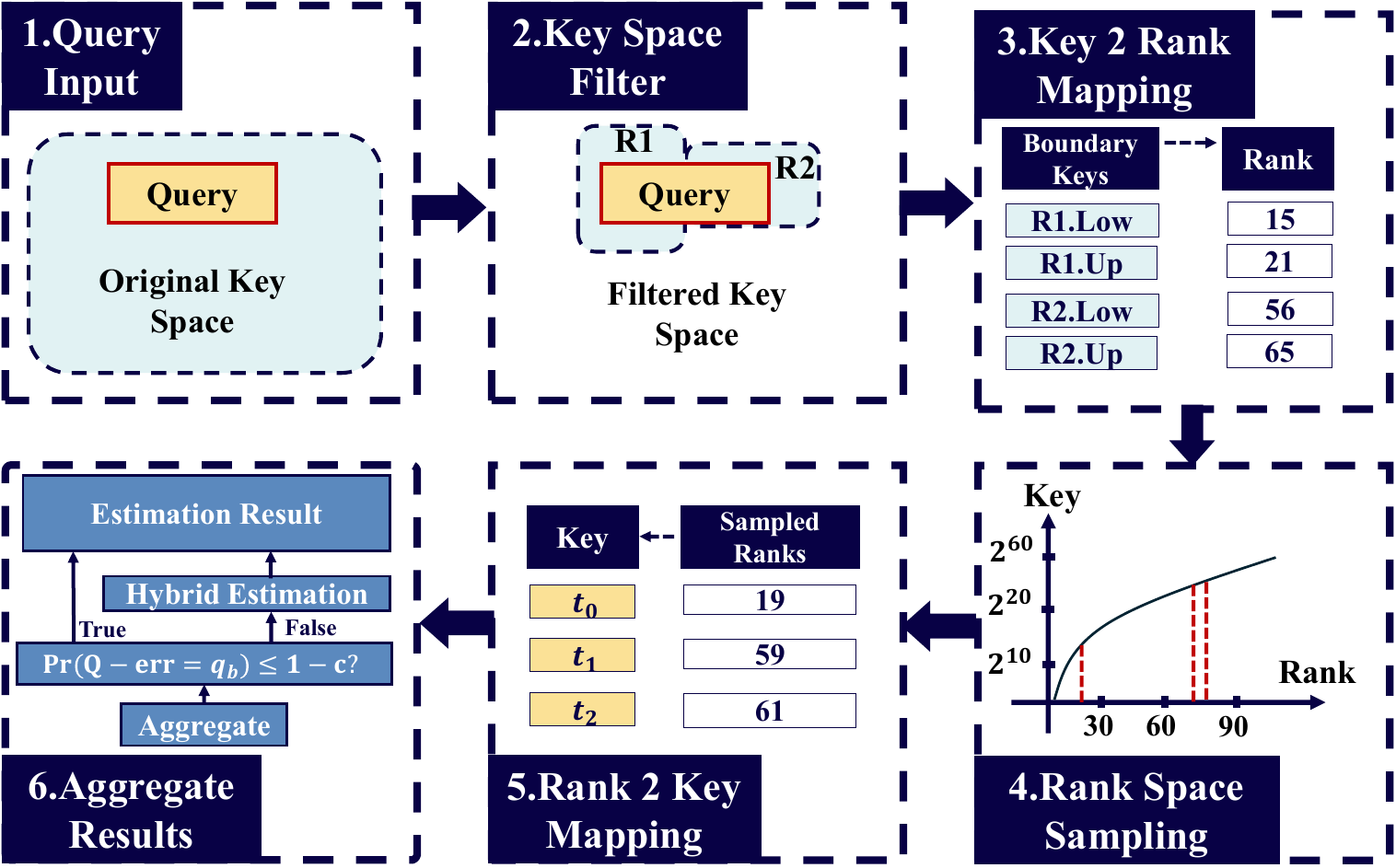}
	\caption{Overview of ICE's CE process }
        \vspace{-1.5em}
	\label{P4}
\end{figure}

\subsection{Key Space Filtering}
\label{Sec.KeyFilter}

After being mapped by a space-filling curve, a query box for range queries will encompass numerous tuples irrelevant to the query. Consequently, sampling directly within the re-mapped query box would result in low efficiency and large estimation errors. Therefore, it is necessary to pre-filter the original sampling space.\looseness=-1 

Furthermore, we observe that when handling range queries with a multidimensional index based on a space-filling curve, a common practice is to leverage the inherent properties of the curve to recursively or iteratively subdivide the query box before or during the query execution. This approach enables the "pruning" of points outside the query box, thereby reducing unnecessary scan overheads and enhancing query execution efficiency. Consequently, a straightforward intuition arises: \textbf{\textit{Can we draw inspiration from the data-skipping techniques employed in multidimensional indexing execution~\cite{ramsak2000integrating,lmsfc} to narrow the sampling space and enhance sampling efficiency?}} Thus, this section will discuss how we can improve sampling efficiency by using data-skipping techniques commonly used in multidimensional indexing.\looseness=-1

\begin{algorithm}[htb]
	\caption{Recursive filtering.}
	\label{alg.RecF}
	\begin{algorithmic}[1] 
		\Require
		{ Query box $Q$, Current recursive depth $d$.  }
		\Ensure
		A list of filtered query regions $L$;
        \Procedure{RecursiveFiltering}{$Q,d$}
        \If{$d=0$} \Comment{Initialize}
        \State{$L=[];$}
        \EndIf
        \If{$d \geq d_{max}$}  \Comment{Reached maximum depth}
        \State{$L.append(Q);$}
        \Else  \Comment{Separate and filter }
        \State{$p = findSeparationPoint(Q);$}
        \State{$Q_{1},Q_{2} = \phi;$}
        \If{$p \in Q$}
        \State{$Q_{1}.Low = Q.Low; Q_{1}.Up =p;$}
        \State{$Q_{2}.Low = p; Q_{2}.Up =Q.Up;$}
        \Else
        \State{$Q_{1}.Low = Q.Low; Q_{1}.Up = getLITMAX(p,Q);$}
        \State{$Q_{2}.Low = getBIGMIN(p,Q); Q_{2}.Up =Q.Up;$}
        \EndIf
        \State{$RecursiveFiltering(Q_1,d+1);$}
        \State{$RecursiveFiltering(Q_2,d+1);$}
        \EndIf
        \EndProcedure
        \end{algorithmic}
\end{algorithm}

We recursively implement the idea above in Algorithm~\ref{alg.RecF}. Given a query box $Q$ and the maximum search depth $d_{max}$, we recursively partition and filter the current query box. We initialize the filtered list $L$ at depth 0  (lines~2-3). If the current search depth reaches the maximum depth $d_{max}$, we add the current subdivided query to the regions list $L$  (lines~4-5). Otherwise, we progressively recursively partition and filter using the data-skipping tricks of space-filling curves  (lines~6-16). We find the appropriate partition point $p$  (line~7) and initialize the query box $Q_1,Q_2$ for further recursive searching  (line~8). If the partition point $p$ falls within the query box, we use $p$ as the left and right endpoints of $Q_1,Q_2$  (lines~9-11); otherwise, we utilize the $getBIGMIN$ and $getLITMAX$ methods from the Z-order space-filling curve\cite{ramsak2000integrating,tropf1981multidimensional} to skip the irrelevant areas and filter the current query box into new sub-region $Q_1$ and $Q_2$  (lines~12-14). Finally, we conduct a further recursive search on refined space  (lines~15-16). Given the data encoding length $n$, the above algorithm performs pre-filtering within $O(n\times2^{d_{max}})$ time.\looseness=-1


Note that the $findSeparationPoint$ function is employed in line~7 of our algorithm, which selects a separation point within the range from $Q.Low$ to $Q.Up$ to partition the subquery region. Regarding how to choose such a separation point, existing works adopt different approaches. CardIndex\cite{CardIndex} directly selects the midpoint between $Q.Low$ and $Q.Up$, \textcolor{black}{whereas  LMSFC\cite{lmsfc} formulates the selection as an optimization procedure, namely the "optimal 1-split". That is, under the condition of fixing the values of other columns, a split value is selected only on a fixed column at a time to maximize the length of the skipped gaps. Through experiments, we find out that the "optimal 1-split" is prone to falling into local optima during recursive selection, resulting in inefficient tuples filtering and relatively large estimation errors during sampling.} Consequently, we ultimately choose to select the midpoint of the query box.\looseness=-1

\subsection{Rank Space Sampling}
\label{Sec.CEALG}

In this section, we will devise a CE algorithm based on sampling in the rank space, integrating the key space filtering technique designed in the previous sections with the bijective mapping technique from the key space to the rank space learned by the index. Sampling in the rank space is primarily due to its compactness, where each rank corresponds to a unique tuple. In contrast, the key space is extremely sparse, and two adjacent tuples in the rank space may exhibit significant differences in their keys. Considering the following example:\looseness=-1

\textbf{Example.} Suppose we have a list of data: $[2^{10},
 2^{20}, 2^{30}, 2^{40}]$. If we need to sample from this sorted list to estimate the number of tuples less than $2^{20}$, sampling in the key space would require selecting two tuples less than $2^{20}$ from a space of $2^{40}$, which is highly inefficient. In contrast, the rank space of the aforementioned list is $[1,2,3,4]$. When sampling the rank space, we can directly select two tuples from a set with size equals four. This greatly enhances the sampling efficiency.\looseness=-1


\begin{algorithm}[htb]
	\caption{Index-based CE}
	\label{alg.ICESAM}
	\begin{algorithmic}[1] 
		\Require
		{ Query box $Q$; ICE structure $ICE$; Sample budget $b$; Q-error bound $q_{b}$; Confidence $c$;} 
		\Ensure
		Estimated cardinality $est$;
        \State{$L=\phi;$}
        \State{$RecursiveFiltering(Q,0);$} \Comment{ Filter key space }
        \State{$R = Key2Rank(L); $} \Comment{ Key to rank encoding }
        \State{$rSum = R.sum();$}
        \State{$samples = SampleFromRanks(R,b); $} \Comment{ Rank space sampling  }
        \State{$count=0;$}
        \For{$s_i \in samples$} \Comment{ Aggregation}
            \State{$k_i = Rank2Key(s_i);$} \Comment{ Rank to Key decoding }
            \If{$InQueryBox(k_i,Q)$}
            \State{$count= count + 1;$}
            \EndIf
        \EndFor
        \State{$est = (count/b) \times rSum;$}
        \State{$P=C_{est \cdot q_b}^{count}\times(1-\frac{b}{rSum})^{est \cdot q_b - count}\times (\frac{b}{rSum})^{count};$} 
        \If{$P>(1-c)$} \Comment{Estimates is prone to big errors}
        \State{$est = RangeQuery(ICE,Q);$} \Comment{Hybrid estimation}
        \EndIf
        \\
        \Return{$est$}
        \end{algorithmic}
\end{algorithm}

The CE algorithm is illustrated in Algorithm~\ref{alg.ICESAM}. Firstly, variable initialization is performed in line~1, followed by recursively filtering the query region within the key space  (line~2). After obtaining the filtered key space within list $L$, we encode each sub-region into the rank space and calculate the total length of these regions  (lines~3-4). Subsequently, we sample $b$ samples from these sub-regions  (line~5). During the aggregation process  (lines~7-10), we map these samples through the Rank2Key transformation (line~8), decoding them back to their tuple representations, i.e., the key values $k_i$. Then, we check whether the tuples are within the query box (line~9); if so, the counter $count$ is incremented (line~10). We estimate the results in line~11 and calculate the probability that the estimation is out of bound (line~12). When the cardinality of the query is too small, insufficient sampling cannot effectively cover the query region, and the estimated result is likely to have a high probability of significant errors (line~13). We then utilize the index execution (i.e., hybrid estimation) to conduct the last-mile search for this low-cardinality query (line~14). When getting the binomial distribution probability value in line~12, we use Gaussian approximation for simplified calculation when $est \cdot q_b$ is greater than 20. In Section~\ref{Sec.Analysis}, we prove that such last mile search could bound the Q-error within $q_b$ with a confidence of $c$.\looseness=-1

The complexity of the Algorithm~\ref{alg.ICESAM} is $O(b\times log(N))$, where $b$ is the input sampling budget, and $N$ is the total size of the data. Because we perform once point query on the index for each sample point to decode from the rank space to the data representation. \looseness=-1


\begin{figure}[htbp]
	\centering
	\includegraphics[width=6.5cm]{./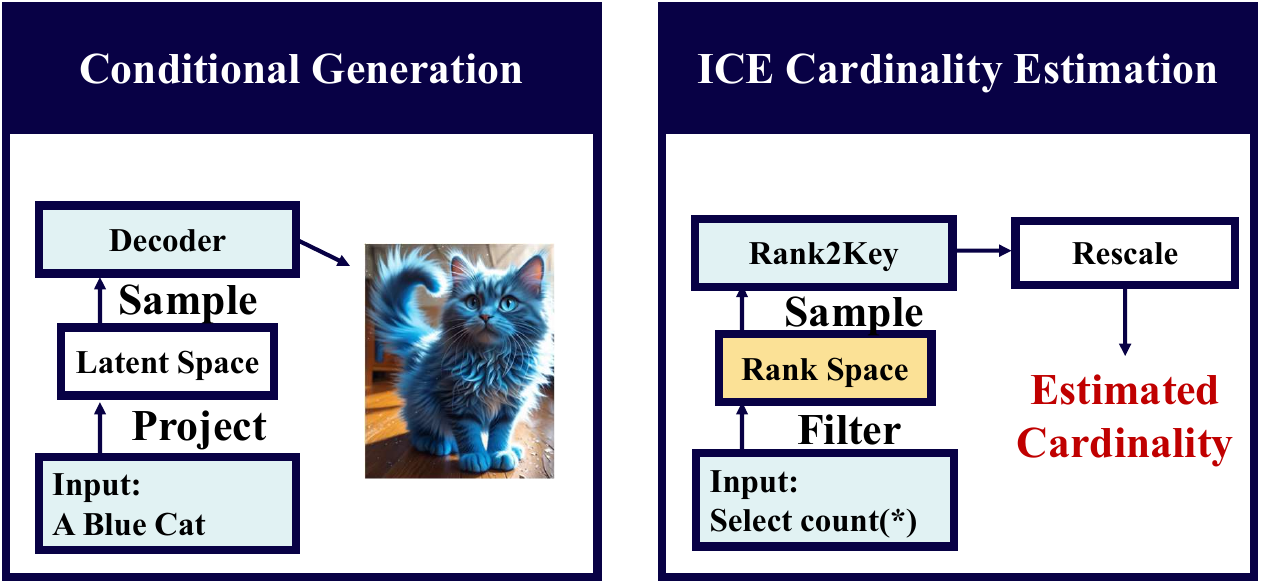}
	\caption{ICE's sampling, a generative model perspective}
	\label{P6}
\end{figure}

\textbf{Discussions.} From a generative model perspective(see Figure~\ref{P6}), the above sampling algorithm offers a more profound understanding that transcends simple sampling with replacement. Considering the discussions in Section~\ref{Sec.Tup2CDF}, ICE can be viewed as an AutoEncoder that transforms representations from the key space to its latent space, i.e., the rank space. Consequently, in Algorithm~\ref{alg.ICESAM}, the recursive filtering at line~2 essentially projects the query conditions into the ICE's latent space, while lines~6-11 employ the Rank2Key mapping to achieve conditional generation within this latent space. This conditional generation paradigm is widely adopted in computer vision fields~\cite{van2016conditional,li2019controllable,ConditionalDiffusion}, where, for instance, the semantic description of a blue cat is projected into the latent space, and a decoder subsequently samples from the projected latent space and generates an image of a blue cat based on this subspace. Similar techniques have also been applied to databases for controlled tuple generation~\cite{DifusionTup}. Compared to existing conditional generation techniques, the advantage of CE based on ICE lies in the nature of its latent space, the rank space, which is low-dimensional, ordered, and compact. The size of the conditional subspace ($rSum$) projected by the query in the rank space is straightforward to compute. We simply need to incrementally accumulate the differences in ranks between interval endpoints. This enables us to directly examine the number of generated tuples that satisfy the query conditions and rescale this count to obtain the predicted cardinality.\looseness=-1

\subsection{Analysis of ICE Sampling Algorithm}
\label{Sec.Analysis}

Next, we will prove the unbiasedness of ICE's sampling algorithm  (Algorithm~\ref{alg.ICESAM}) and give its variance analysis. Finally, we will prove the correctness of our Q-error bounding technique. \looseness=-1

\begin{theorem}
    \label{The.01}
    Given a query $Q$, the estimation result of ICE's sampling algorithm, $est$, is unbiased, i.e. $E[est] = card(Q)$.
\end{theorem}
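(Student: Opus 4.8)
The plan is to recognize that, up to the optional hybrid fallback, the estimator produced by Algorithm~\ref{alg.ICESAM} is $est = (count/b)\times rSum$, where $count$ counts how many of the $b$ rank-space samples decode to tuples lying inside the query box $Q$. Since $b$ and $rSum$ are deterministic once the filtering is done, the whole argument reduces to computing $E[count]$ and applying linearity of expectation. I would therefore first isolate the sampling branch (lines~5--11): the hybrid branch (lines~13--14) returns $RangeQuery(ICE,Q)=card(Q)$ exactly, so it contributes no bias, and it suffices to show the pure sampling estimator is unbiased.

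The first key step is to establish that the recursive filtering of Algorithm~\ref{alg.RecF} is \emph{lossless}: the $getBIGMIN$/$getLITMAX$ data-skipping only discards Z-order gaps that provably contain no point of $Q$, so every tuple satisfying the predicate survives into the filtered list $L$. Consequently, after the bijective $Key2Rank$ encoding, the filtered rank regions contain exactly $rSum$ tuple-instances (counting multiplicities via the $o(t)$ counters), and precisely $card(Q)$ of them correspond to tuples inside $Q$. This lets me define the true positive proportion
\begin{equation}
p = \frac{card(Q)}{rSum},
\end{equation}
which is well defined because losslessness guarantees $card(Q)\le rSum$.

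The second key step is to characterize $count$ distributionally. Because the rank space is compact and the bijection $Rank2Key$ restores each sampled rank to a unique tuple, a sample drawn uniformly from the $rSum$ ranks in the filtered regions decodes to a tuple in $Q$ with probability exactly $p$. Writing $X_i = \mathbf{1}[k_i \in Q]$ for the indicator at line~9, each $X_i$ is Bernoulli$(p)$ and $count = \sum_{i=1}^{b} X_i$. By linearity of expectation, $E[count] = b\,p = b\,card(Q)/rSum$, and substituting into $est = (count/b)\times rSum$ gives
\begin{equation}
E[est] = \frac{E[count]}{b}\times rSum = \frac{b\,card(Q)/rSum}{b}\times rSum = card(Q),
\end{equation}
which is the claim.

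The main obstacle I anticipate is the first step rather than the probabilistic computation: rigorously justifying that the filtering discards only true Z-order gaps and that $Key2Rank$ maps the filtered key regions onto rank intervals whose total size is exactly the number of surviving tuple-instances. Everything downstream is routine once the sampling support is shown to be uniform over precisely these $rSum$ ranks with exactly $card(Q)$ ``successes.'' A secondary subtlety worth flagging is that the bound-checking step at lines~12--16 switches estimators based on the same random sample; I would handle this by treating the hybrid fallback as an exact (hence unbiased) computation and arguing the theorem for the sampling estimator per se, which is the quantity whose expectation the statement concerns.
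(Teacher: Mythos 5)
Your proposal is correct and follows essentially the same route as the paper's proof: losslessness of the recursive filtering, the key--rank bijection, Bernoulli indicators with success probability $card(Q)/rSum$, linearity of expectation, and dismissal of the hybrid branch as exact. Even the subtlety you flag (the data-dependent switch to the hybrid fallback at lines~12--16) is handled by the paper in the same informal way, so the two arguments match in both substance and level of rigor.
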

\begin{proof}
	(Sketch):
	Regarding Algorithm~\ref{alg.ICESAM}'s second line, the recursive filtering does not exclude tuples in Q, hence preserving the estimated cardinality. ICE's bidirectional mapping between rank and key spaces ensures a one-to-one correspondence between tuples, ensuring true and predicted cardinalities align in both spaces.  Meanwhile, the hybrid estimation at line~14 of the Algorithm~\ref{alg.ICESAM} retrieves the true cardinality for low-cardinality queries, thus not affecting the unbiasedness. Therefore, we attempt to prove that the results of sampling and aggregation in the rank space in lines 5-11 of Algorithm~\ref{alg.ICESAM} are unbiased as follows.

    In the rank space, we perform sampling with replacement. We use the variable $e_i = 1$ to indicate the event that a single sample falls within the query box. The probability of this event is $Pr(e_i = 1) = {card}(Q) / rSum$, where $rSum$ is the total size of the filtered rank space (Line~4 Algorithm~\ref{alg.ICESAM}). Therefore, $E[{est}] = E[count / {b} \times rSum] = \frac{rSum}{b} \times E[\Sigma_{i=1}^{b} e_i] = \frac{rSum}{b} \times \left(  b\times \frac{card(Q)}{rSum}   \right) =  card(Q)$.
    In summary, our estimation algorithm is unbiased.

\end{proof}
In order to give the derivation of the variance, we now define the efficiency of a multidimensional index range query. The filtering efficiency $\eta$ when an index proceeds a range query $Q$ is defined as the cardinality of query $Q$ divided by the tuples scanned when executing $Q$, i.e.  $\eta = \frac{card(Q)}{\textit{\text{Tuples scanned}}}$.

From the above formula, it is not difficult to find that the least efficient way to execute a query is to scan the entire table, with an efficiency of $\eta_{scan}=card(Q)/N$, where $N$ is the full table size. An efficient multidimensional index will try to improve the efficiency as much as possible, making $\eta$ as close to 1 as possible. Therefore, similar concepts to $\eta$ are often used as a loss function for optimizing multidimensional indexes~\cite{lmsfc,ding2020tsunami,Flood}. Based on the above concepts, we derive ICE's estimation's variance:

\begin{theorem}
    \label{The.02}
    The variance of the estimation result,  $Var[est] \leq \frac{1}{b} \times card(Q)^2 \times (\frac{1}{\eta_{I}}-1)$, where $b$ is the sampling budget and $\eta_{I}$ is the efficiency when the range query $Q$ is processed on ICE. When hybrid estimation is turned off, the inequality can be an equality.
\end{theorem}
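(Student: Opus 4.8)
The plan is to isolate the random core of the algorithm---lines~5--11, the rank-space sampling and aggregation---because Theorem~\ref{The.01} already establishes that recursive filtering (line~2), the key/rank bijection (lines~3,~8), and the hybrid fallback (line~14) all preserve the true cardinality. I would first treat the case where hybrid estimation is disabled, which should yield the stated expression as an exact equality, and only afterwards argue the inequality for the full algorithm.

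First I would observe that, since rank-space sampling is performed with replacement, the indicator variables $e_1,\dots,e_b$ introduced in the proof of Theorem~\ref{The.01} are i.i.d.\ Bernoulli with success probability $p = card(Q)/rSum$. Hence $count = \sum_{i=1}^{b} e_i$ is $\mathrm{Binomial}(b,p)$ with $Var[count] = b\,p(1-p)$. Because $est = (count/b)\cdot rSum$ is an affine function of $count$, it follows that
\begin{equation}
Var[est] = \frac{rSum^2}{b^2}\,Var[count] = \frac{rSum^2}{b}\,p(1-p).
\end{equation}
Substituting $p = card(Q)/rSum$ and simplifying gives $Var[est] = \frac{1}{b}\,card(Q)\big(rSum - card(Q)\big) = \frac{1}{b}\,card(Q)^2\big(\frac{rSum}{card(Q)}-1\big)$. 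This part is a routine Bernoulli/Binomial computation.

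The conceptual step is to connect $rSum$ to the index efficiency $\eta_I$. After the recursive filtering of Section~\ref{Sec.KeyFilter}, the quantity $rSum$ is exactly the number of tuples lying in the filtered sub-regions, which coincides with the set of tuples ICE would scan when executing the range query $Q$ under the same data-skipping procedure. Therefore ``Tuples scanned'' $= rSum$, so by definition $\eta_I = card(Q)/rSum$ and $rSum/card(Q) = 1/\eta_I$. Plugging this into the previous expression turns it into the equality $Var[est] = \frac{1}{b}\,card(Q)^2(\frac{1}{\eta_I}-1)$, settling the hybrid-off case.

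For the full algorithm I would write the hybrid estimator as $est = X\cdot\mathbf{1}[A^c] + card(Q)\cdot\mathbf{1}[A]$, where $X=(count/b)\cdot rSum$ is the pure-sampling estimator and $A$ is the (sample-dependent) event that line~13 triggers the exact range query. On $A$ the estimate equals the true cardinality and contributes zero error, so intuitively the substitution can only shrink the spread of $est$ around $card(Q)$. Making this rigorous is the main obstacle: because the triggering event $A$ is itself a function of the realized sample, the conditional law of $X$ given $A^c$ is no longer the original binomial, so one cannot merely discard the high-variance realizations. I would handle this via the law of total variance together with the unbiasedness already granted by Theorem~\ref{The.01}, showing that routing the error-prone branch to an exact evaluation removes a nonnegative amount of variance; the bound then becomes an inequality, with equality exactly when hybrid estimation is off (so that $A$ never occurs and $est = X$ throughout).
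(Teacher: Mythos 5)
Your proposal follows essentially the same route as the paper's proof: the hybrid-off case is the identical computation with $count \sim \mathrm{Binomial}(b,\, card(Q)/rSum)$ scaled by $rSum/b$, the key identification $\eta_{I} = card(Q)/rSum$ (justified, as in the paper, by the fact that ICE scans exactly the tuples in the filtered rank space, so ``tuples scanned'' $= rSum$) is the same, and the hybrid-on inequality is argued the same way, by observing that routing error-prone low-cardinality queries to exact index execution can only reduce the spread. If anything, you are more careful than the paper on this last step, since the paper simply asserts the variance reduction, whereas you correctly flag the sample-dependence of the triggering event as the point needing a law-of-total-variance (or MSE-domination) argument.
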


\begin{proof}	
(Sketch): We first consider the case when hybrid estimation is turned off. Considering Theorem~\ref{The.01}, our further analysis is still carried out in the rank space. Flag variable $e_i$ indicates the event that a single sample $s_i$ falls within the query box, and with a probability of $Pr(e_i=1) = card(Q)/rSum$. Since ICE can take a proactive skipping strategy that scans every tuple within the filtered rank space, the efficiency $\eta_{I}$ of ICE when executing query $Q$ is also $\frac{card(Q)}{rSum}$. We obtain that $Pr(e_i=1) = \eta_{I} $. In the process of sampling in the rank space, the aggregation result count follows a binomial distribution $B(b, \eta_{I})$. Consequently, the variance of $count$ is given by $ {Var}[{count}] = b \times \eta_{I} \times (1 - \eta_{I})$. Since the final estimated result $est$ is scaled by a factor of $rSum/b = card(Q)/\eta_{I}$ from $count$, the variance of the result $est$ when hybrid estimation turned off is:\looseness=-1

    $$  {Var}[ {est}] = \left(\frac{ {card}(Q)}{b\times\eta_{I}}\right)^2 \times  {Var}[ {count}] = \frac{1}{b} \times  {card}(Q)^2 \times \left(\frac{1}{\eta_{I}} - 1\right) $$

    Finally, when the hybrid estimation is enabled, we can execute those potentially incorrectly estimated low-cardinality queries via index search to obtain their true cardinalities without affecting the estimations of high-cardinality queries. This results in a reduction of the overall variance, which proves the inequality.

\end{proof}

\vspace{-1em}

From the above variance formula, it is evident that there are two factors independent of the query, the sampling size $b$ and the index execution efficiency $\eta_{I}$. To achieve a more precise estimation, we need to make the variance near 0. Based on Theorem~\ref{The.02},  we have two approaches of increasing the sampling budget to make the term $1/b$ approach zero ($\mathcal{A}_1$), and improving the index efficiency to make $(1/\eta_{I} - 1)$ approach zero ($\mathcal{A}_2$). We can improve the efficiency of the latter approach by increasing the recursive search depth $d_{max}$ in Algorithm~\ref{alg.RecF}. These theorems connect two widely studied problems in the field of estimation and index, as $\mathcal{A}_1$ is extensively studied in the field of CE~\cite{SampleCE2,larson2007SampleCE1}, and $\mathcal{A}_2$ is a focus of index optimization research~\cite{Flood,ding2020tsunami,lmsfc}.



Next, in Theorem~\ref{The.03}, we will prove that, without the hybrid estimation in line~14  of Algorithm~\ref{alg.ICESAM}, the probability that ICE underestimates the query's cardinality, thereby resulting in a Q-error of $q_b$, equals the probability calculated in line~12 of Algorithm~\ref{alg.ICESAM}.

\begin{theorem}
    \label{The.03}
    When hybrid estimation turned off, $Pr( Q-error = q_b ) = C_{est \cdot q_b}^{count}\times(1-\frac{b}{rSum})^{est \cdot q_b - count}\times (\frac{b}{rSum})^{count} $, where $b$, $rSum$, $count$ are the intermediate results in the Algorithm~\ref{alg.ICESAM}.
\end{theorem}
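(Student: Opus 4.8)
The plan is to first strip away everything except the rank-space sampling in lines~5--11 of Algorithm~\ref{alg.ICESAM}, then identify what the target Q-error forces on the true cardinality, and finally compute the relevant probability under a ``dual'' binomial model on the query tuples. Since hybrid estimation is disabled, the only randomness is the sampling, so I would begin exactly as in Theorem~\ref{The.01} and argue entirely in the rank space. An underestimation with Q-error equal to $q_b$ means $card(Q)/est = q_b$, i.e. $card(Q) = est\cdot q_b$; thus the quantity $est\cdot q_b$ appearing in the claimed formula is precisely the true cardinality $card(Q)$ under the event being measured. This observation is the hinge of the whole argument, since it lets me replace the unknown $card(Q)$ by the observable $est\cdot q_b$ everywhere.

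Next I would set up the dual counting model. In the filtered rank space of size $rSum$ there are exactly $card(Q) = est\cdot q_b$ tuples satisfying $Q$. Rather than tracking how many of the $b$ samples land in the query region, I would track how many of these $card(Q)$ query tuples are \emph{caught} by the $b$-sample. Because we sample with replacement from a rank space that is large relative to a low-cardinality query, the probability that a fixed query tuple is hit by the sample is $1-(1-1/rSum)^b \approx b/rSum$, and across the $card(Q)$ query tuples these catch events are, to the same order, independent. Hence $count$ is binomial $B\!\left(est\cdot q_b,\, b/rSum\right)$, whose probability mass at $count$ is
\[
C_{est \cdot q_b}^{count}\left(\frac{b}{rSum}\right)^{count}\left(1-\frac{b}{rSum}\right)^{est\cdot q_b - count},
\]
which is exactly the claimed expression.

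The step I expect to be the main obstacle is reconciling this dual model with the \emph{direct} model that literally falls out of the algorithm, namely ``draw $b$ samples and count how many are good,'' which gives $count\sim B\!\left(b,\, card(Q)/rSum\right)$. These two binomials are not equal term by term, but they share the common mean $b\cdot card(Q)/rSum$, and in the sparse low-cardinality regime both converge to the same Poisson limit with that mean; equivalently, both are binomial approximations to a single symmetric hypergeometric law for $count$, whose invariance under exchanging the sample size $b$ with the subpopulation size $card(Q)$ is what legitimizes the $b \leftrightarrow card(Q)$ swap. I would therefore make the equality precise in this regime, arguing that the probability of catching any query tuple more than once is negligible, and note that the Gaussian approximation invoked at line~12 for $est\cdot q_b>20$ is only a computational shortcut for evaluating this mass and is not needed for the statement itself. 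The remaining bookkeeping --- substituting $count = est\cdot b/rSum$ and checking the exponents --- is routine.
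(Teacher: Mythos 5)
Your proposal takes essentially the same route as the paper's own (sketch) proof: identify the true cardinality with $est\cdot q_b$ under the underestimation event, then model $count$ as binomial $B\!\left(q_b\cdot est,\ b/rSum\right)$ by viewing each query tuple as being ``caught'' by the sample pool with probability $b/rSum$, which yields the stated mass function. In fact you are more careful than the paper, which simply asserts the per-tuple catch probability $b/rSum$ and independence, whereas you explicitly flag these as sparse-regime approximations (via $1-(1-1/rSum)^b\approx b/rSum$ and the hypergeometric/Poisson reconciliation with the direct model $B\!\left(b,\ card(Q)/rSum\right)$).
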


\begin{proof}(Sketch): We derive the above theorem by examining how the points in the query box are distributed within the sampling pool. Given that $est$ is underestimated by a factor of $q_b$, we know that the true cardinality is $card=q_b\times est$. Since the probability of a single sample point falling into the sampling pool during sampling is $b/rSum$, the event that $est$ has a Q-error of $q_b$ will follow a binomial distribution $B(q_b\cdot est, \frac{b}{rSum})$, therefore, $P(Q-error = q_b) = C_{q_b\cdot est}^{count}\times (1-\frac{b}{rSum})^{q_b\cdot est-count}\times (\frac{b}{rSum})^{count}$.\looseness=-1
\end{proof}

The above theorem provides us with the idea that after obtaining the sampling results in $count$, we can utilize some intermediate information(e.g., $b,rSum$) from the sampling to calculate the probability $P$ that the estimated result has a Q-error reaching the preset threshold $q_b$. The user can specify the maximum allowable Q-error $q_b$ and the probability confidence $c$. If $P$ exceeds $1-c$, it indicates that the cardinality of the query is too small, and sampling cannot obtain an accurate estimate. We will then perform an index execution (i.e., hybrid estimation) for such small-cardinality queries to obtain the true cardinality result, thereby ensuring that the model does not make estimates exceeding the maximum Q-error.\looseness=-1

\section{Experiments}
\label{sec.EXP}

In this section, we attempt to answer the following questions via our experiments.

1. Compared with the state-of-the-art cardinality estimators, under various kinds of dynamic and static environments, how does ICE perform regarding the estimation accuracy and inference time? Is the estimation robust under workload drifting? (Section~\ref{Sec.EstEva})

2.  Compared with the state-of-the-art cardinality estimators, how long does it take to train and update an ICE from massive dynamic multidimensional data? How much space does it consume? (Section~\ref{Sec.ScaEva})

3. How does the depth of the recursive filtering and the sampling budget affect the estimation accuracy of ICE? How will the selection of different split strategies affect the estimation accuracy? Does the experimental result of Q-error bounding match the theory and bound the Q-error? (Section~\ref{Sec.VarEva})

\subsection{Experimental Setup}
\label{sec.ExpSetup}

\textbf{Datasets.} We use \textcolor{black}{three} real-world datasets for experimental study on CE  tasks. \textcolor{black}{We opt to conduct experiments on these datasets as each of them has been utilized in the assessment of at least one prior study in the fields of CE~\cite{CardIndex,AreWeReady4CE,yang2019deep}.}

(1)\textit{Power}~\cite{Power}: An electric power consumption data, which owns a large domain size in all attributes (each $ \approx $ 2M). Our snapshot contains 2,049,280 tuples with 6 columns.

(2) \textit{DMV}~\cite{DMV}: A real-world dataset consisting of vehicle registration information in New York. We use the following 11 columns with widely differing data types and domain sizes (the numbers are in parentheses): record type (4), reg class(75), state (89), county (63), body type (59), fuel type (9), valid date (2101), color (225), sco ind (2), sus ind (2), rev ind(2). Our snapshot contains 12,300,116 tuples.

(3)\textit{OSM}~\cite{OSM}: Real-world geographical data. We use the dataset of Central America from OpenSteetMarket and select two columns of latitude and longitude. This dataset has 80M tuples(1.8GB in size). Our snapshot contains 80,000,000 tuples with 2 columns.  \looseness=-1

\textbf{Workloads.}  For each dataset, we adopt the conventions in the literature~\cite{ALECE} and create three different dynamic types of workloads, each of which is a random mix of insertions,
 deletions, modifications, and query statements. Each insertion/deletion/modification 
 statement only influences one tuple. The division of the query training set and test set is consistent with the literature~\cite{ALECE}. Finally, we uniformly mix queries and data update operations as follows:

 \textit{Static}:  $\#insert : \#delete : \#modify = 0:0:0$
 
 \textit{Insert-Heavy}:  $\#insert : \#delete : \#modify = 2:1:1$
 
 \textit{Update-Heavy}:  $\#insert : \#delete : \#modify = 1:1:2$
 
Regarding the construction of updating tuples, literature~\cite{ALECE} did not mention its construction process. But this is crucial as two scenarios may arise where the old model continues to perform well~\cite{AreWeReady4CE}, rendering update unnecessary: 1) All updating tuples fall outside the scope of existing queries; 2) Updates are evenly distributed across the dataset, resulting in a new data distribution that is close to the existing one.

To address these challenges, we draw inspiration from PACE~\cite{PACE} by adopting an adversarial approach. We first generate the query boxes and their cardinalities of all the original old data. Given the known query boxes and the cardinalities of the original dataset, we select a small proportion (20\%) of updating tuples based on the original data to degrade the performance of the old model. Our strategy for selecting inserting tuples is weighted sampling from the old table with replacement, assigning tuples selected by low cardinalities queries  with a higher weight. Specifically, we picked $10\%$ of the old query boxes as $W$ and formulate our update weight as $w(t) = \sum_{Q_i\in W} \theta(t\in Q_i)\times min(\frac{1}{sel(Q_i)},10^5) $.  It will maximize the probability of tuples being repeatedly selected in low-cardinality queries as much as possible. Additionally, a threshold of $10^5$ is set to prevent oversampling of the same type of tuples.  For deletion operations, we employ uniform sampling without replacement. For update operations, we decompose them into separate delete and then insert. We generated 2048 test queries for each dataset and distributed them uniformly among data modification operations. A single query is combined with the modification operation before it and is called a batch. For the query boxes of the three datasets, our cardinality distribution before and after modification is shown in Figure~\ref{Fig.QD}:\looseness=-1

\begin{figure}[t]
	\centering
	\includegraphics[width=5.5cm]{./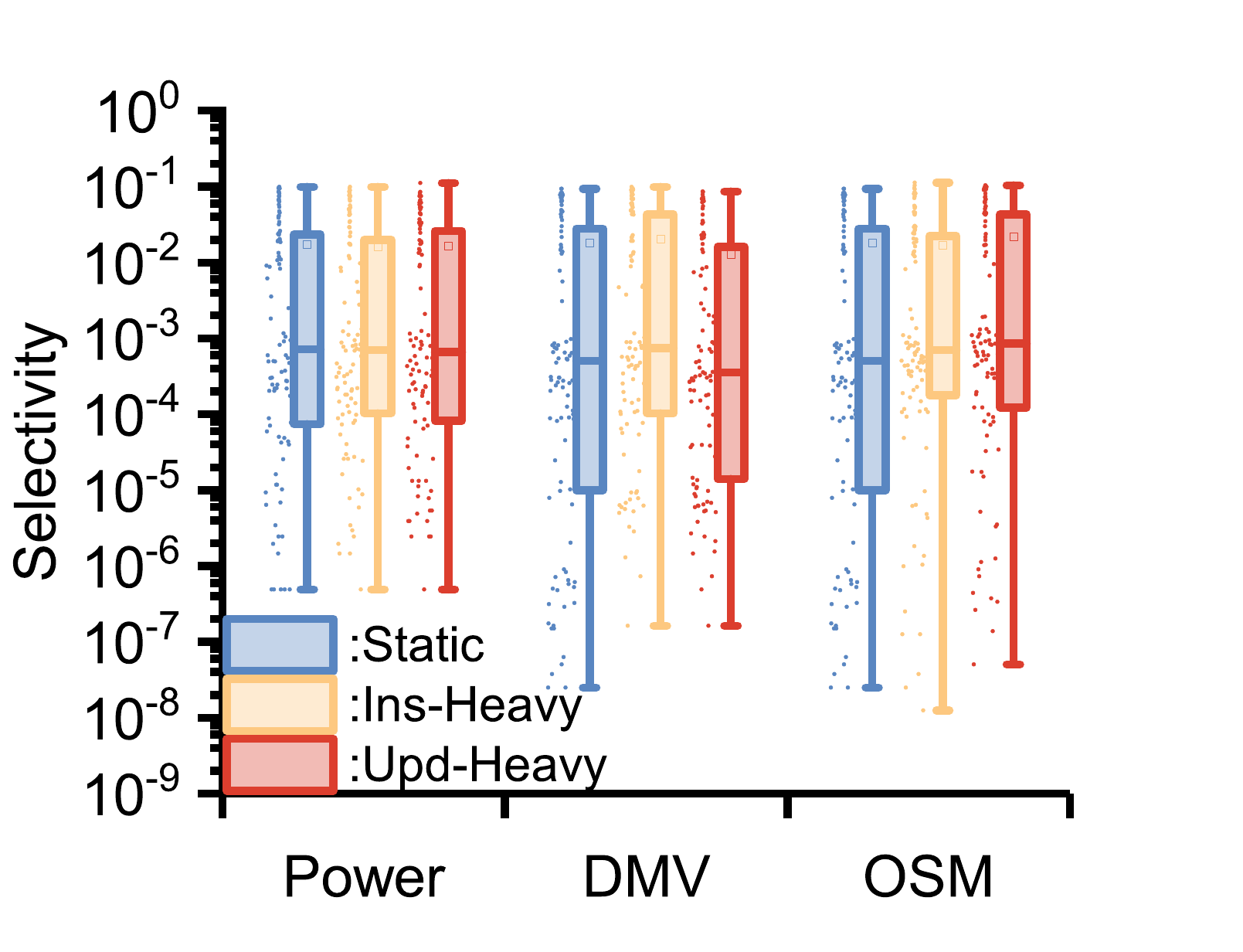}
         \vspace{-0.5em}    
 \caption{ Distribution of workload selectivity (Sampled 10\% from the query workloads)}
	\label{Fig.QD}
         \vspace{-0.5em}    
\end{figure}

\textbf{Competitors.} We choose the following baseline competitors.

(1)\textcolor{black} { \textit{Naru}~\cite{yang2019deep}: A data-driven AR network that uses progressive sampling to estimate the cardinality. }

(2) \textit{Sample}~: This approach samples several tuples in memory for CE. The sampling budget is set at 1/1000 of the original data size.\looseness=-1

(3) \textit{DeepDB}~\cite{hilprecht2019deepdb}: A data-driven method that uses SPN to estimate cardinality.

(4)\textit{ MSCN}~\cite{MSCN_Kipf2018LearnedCE}: A query-driven method that uses a multi-set convolutional network.

(5)\textit{ ALECE}~\cite{ALECE}: The state-of-the-art query-driven model that uses a transformer to learn query representation and histogram features to cardinality. It can quickly support data updates.

(6)\textit{ CardIndex}~\cite{CardIndex}: A data-driven estimator that stacks learned index with an AR model. It can use index scans to execute probed low-cardinality queries and achieve precise estimates.\looseness=-1

\textbf{Environment.} The experiments are conducted on a computer with an AMD Ryzen 7 5800H CPU, NVIDIA  RTX3060 GPU(Laptop), 64 GB RAM, and a 1 TB hard disk. We use C++ on the CPU to infer ICE and CardIndex. For the other baselines, we implement them in Python and use parallelization acceleration if possible. In other words, we leverage GPU to accelerate MSCN and Naru and enable multi-threading acceleration on CPU for CardIndex and ICE. \looseness=-1

\textbf{Evaluation metrics.} To better evaluate cardinality estimators, we adopt the following evaluation metrics: \textit{Accuracy metric}: We use Q-error to evaluate the estimator's accuracy. Q-error is defined as $ Q(E, T) = $  $ max\{\frac{E}{T},\frac{T}{E}\} $ where $ E $ is the estimated cardinality value, and  $ T $ is the real cardinality. We report each workload's entire Q-error distribution as  (50\%, 95\%, 99\%, and  Q-Max quantile). \textit{Latency metric}: We report the average time for each estimation inference and update operation. \textit{Size metric}: We report the total size of estimators.\looseness=-1

\textbf{Parameter settings.} We adopt the original settings of all the baseline methods. {We set the budget of ICE sampling at $20k$ in all datasets.} The fan-out number of the ICE index node is set to 100. And the max recursive depth $d_{max}$ is set to be $6$. We set the confidence $c$ at $1-10^{-7}$ and the maximum tolerable Q-error $q_b$ to be 20.\looseness=-1

\subsection{Estimation Evaluation}
\label{Sec.EstEva}

\begin{table*}[htbp]
\centering
\caption{  Q-errors and inference latency on dynamic and static workloads of 3 read-world datasets    }
\label{Tab.supertab}

\begin{tabular}{l|l|llll|llll|llll|l}
\multirow{2}{*}{Dataset} & \multirow{2}{*}{Method} & \multicolumn{4}{l|}{Static} & \multicolumn{4}{l|}{Insert-Heavy} & \multicolumn{4}{l|}{Update-Heavy} & \multirow{2}{*}{\begin{tabular}[c]{@{}l@{}}Inference \\Latency(ms)\end{tabular}}  \\ 
\cline{3-14}
                         &                         & 50   & 95   & 99   & Max    & 50   & 95   & 99   & Max       & 50   & 95   & 99   & Max      &                           \\ 
\hline
\multirow{8}{*}{Power }     & Naru                    & 1.11 & 2.00 & 3.7 & 16   & 1.23 & 10.3 & 501 & $1e^3$      & 1.22 & 25.8 & 741 & $1e^3$    & 9.6                      \\
                         & MSCN                   & 2.19 & 26.1 & 111 & 284   & 4.13 &58.43 & 465 & $7e^3$       &3.54 & 75.8 & 402 & $1e^3$       & \textbf{0.6}                      \\
                         & ALECE                    & 1.99 & 9.61 & 45.1 & 757   & 2.3 & 33.6 & 65 & 99.5      & 2.18  & 30.7 & 72.7 & 74   & 1.44                      \\
                         & DeepDB                  & 1.13 & 4.02 & 8.95 & 15.3   & 1.27 & 14.7 & 840 & $2e^3$      & 1.31 & 26.5 & $1e^3$ & $2e^3$      & 7.2               \\
                         & Sample                  & 1.69 & 657 & $1e^3$ & $2e^3$   & 1.52 & 722 & $2e^3$ & $4e^3$      & 1.42 & 857 & $2e^3$ & $3e^3$    & 2.38                      \\
                         & CardIndex               & 1.51 & 23.39 & 6.19 & 87   & 1.65 & 40.6 & 710 & $2e^3$      & 2.08 & 66.1 & $1e^3$ & $2e^3$     & 12                      \\

                         & ICE                  & \textbf{1.03}  & \textbf{1.46} & \textbf{2.83} & \textbf{4.33}   & \textbf{1.02} & \textbf{1.54} & \textbf{2.25} & \textbf{3.56}      & \textbf{1.02} & \textbf{1.61} & \textbf{3.04} & \textbf{8.68}     & 4.76                      \\ 
\hline
\multirow{8}{*}{DMV}   & Naru                    & 1.07 & \textbf{1.73} & 4.68 & 30   & 1.14 & 10.9  & 465 & $6e^3$     & 1.14 & 12.8 & 263 & 785     & 17.6                      \\
                         & MSCN                   & 1.8 & 30.1 & 401 & $1e^3$   & 4.55 & 112 & 399 & $4e^3$      & 3.74 & 112 & $1e^3$ & $4e^3$      & \textbf{0.71 }                     \\
                         & ALECE                    & 1.63 & 34.3 & 151 & 323   & 2.47 & 27.9 & 71 & 525    & 3.32 & 42.1 & 74.5 & 315     & 1.51                 \\
                         & DeepDB                  & 1.06 & 2.72 & 32.3 & 210   & 1.18 & 47. & 354 & $3e^3$      &  1.16 & 17.6 & 384 & $2e^4$      & 5.9                      \\
                         & Sample                  & 1.23 & 475 & $1e^3$ & $2e^3$   & 1.1 & 217 & $1e^3$ & $6e^3$      & 1.15 & 287  & $1e^3$ & $4e^3$     & 4.7                 \\
                         & CardIndex               & 1.62 & 4.68 & 7.99 & 57   & 1.71 & 57.1 & 157 & $3e^3$      & 1.43 & 74.7 & 142 & $1e^3$     & 8.46                      \\
                         & ICE                   & \textbf{1.03} & {2.35} & \textbf{{4.54}} & \textbf{17.1}   & \textbf{1.02} & \textbf{1.55} & \textbf{2.75} & \textbf{12}      & \textbf{1.03} & \textbf{2.35} & \textbf{4.54} & \textbf{11.5}     & 8.12                      \\ 
\hline
\multirow{8}{*}{OSM}     & Naru                    & 1.09 & 4.82 & 30.5 & 347   & 1.34 & 43.5 & $9e^3$ & $1e^4$      & 1.29 & 30.5 & $8e^3$ & $3e^4$     & 6.4                      \\
                         & MSCN                   & 2.96 & 314 & $2e^3$ & $8e^3$   & 5.76 & 456 & $3e^3$ & $1e^4$      & 6.47 & 657 & $3e^3$ & $6e^3$            &\textbf{0.79}          \\
                         & ALECE                    & 1.79 & 66 & 218 & 463  & 3.89 & 105 & 189 & 829      & 4.92 & 122 & 583 & $1e^3$      & 1.35                     \\
                         & DeepDB                  & 1.03 & 5.9 & 174 & $2e^3$    & 1.27 & 80.3 & $9e^3$ & $6e^4$      & 1.21 & 74.9 & $1e^4$ & $5e^4$       & 6.5                      \\
                         & Sample                  & 1.06 & 117 & 626 & $4e^3$   & 1.04 & 140 & 788 & $2e^3$      & 1.05 & 93 & 454 & $1e^3$     & 2.3               \\
                         & CardIndex               & 1.81 & 61 & 181 & $1e^3$   & 2.1 & 364 & $6e^3$ & $3e^4$      & 1.65 & 577 & $1e^4$ & $4e^4$     & 9.47                      \\
                         
                         & ICE                   & \textbf{1}  & \textbf{1.61} & \textbf{2.63} & \textbf{6.13}  &\textbf{1} & \textbf{1.43} & \textbf{3.62} & \textbf{12.5}       & \textbf{1} & \textbf{1.44} & \textbf{2} & \textbf{3.51}     & {4.31}                   
\end{tabular}
\end{table*}

\textbf{Comparison on static accuracy.} We conduct CE  tests on static workloads in Table~\ref{Tab.supertab}. We find that ICE achieves the best estimation performance across nearly all Q-error metrics. Specifically, it improves the estimation accuracy up to 50 times compared to Naru, up to 160 times compared to CardIndex, and up to 80 times compared to ALECE. This is because ICE's key space filtering trick can effectively filter out irrelevant query areas, leading to higher sampling efficiency. Meanwhile, after enabling hybrid estimation, ICE can effectively utilize the information obtained during sampling to calculate the probability that a large estimation error occurs. For small cardinality queries with high odds having large-scale errors, ICE can use fast index-scan to obtain precise results, preventing terrible mistakes and bound the maximum Q-error.\looseness=-1


\textbf{Comparison on dynamic accuracy.} We conduct tests on both Insert-Heavy and Update-Heavy workloads. In real-world environments, not all models support real-time updates; therefore, estimators with excessively long update times will directly adopt the stale model for prediction\cite{ALECE}. To determine which models will be instantly updated after the update operations, we first test the time required to update all models for one batch and report the results in Figure~\ref{Fig.BUT}. Using a threshold of 1 second per update batch in Figure~\ref{Fig.BUT}, we determine whether these baseline methods could perform instant updates.  In other words, we perform real-time updates for the four estimators: \uline{C}ard\uline{I}ndex(CI), ALECE, Sample, and ICE.  Due to the slow update time of the remaining models, we employ the old model to predict unknown query cardinalities. The results can be found in Table~\ref{Tab.supertab}.  We find out that ICE achieves the best estimation in both dynamic workloads. Regarding estimation accuracy, it is up to 2 orders of magnitude better than ALECE and up to 4 orders of magnitude better than the remaining baselines. The reasons for ICE's outstanding performance in dynamic scenarios are as follows: (1) The data distribution learned by ICE remains consistent and up-to-date with the data source, preventing ICE from predicting significantly erroneous cardinalities due to outdated models, as seen in Naru and DeepDB. (2) ICE is a data-driven model, meaning that it will not produce huge estimation errors due to out-of-distribution (OOD) phenomena on the testing workload, as ALECE and MSCN might suffer from. (3) CI's update strategy is not scalable. CI samples $O(I)$ tuples from merged data to fine-tune its AR network\cite{CardIndex}, where $I$ is the size of the update operation. However, within larger datasets like OSM and DMV, such sparse data cannot effectively update the model, which is still outdated. Considering that the AR network is also CI's root node, insufficient updates cause a chain reaction where both the point queries of the index and the progressive sampling of high cardinalities exhibit significant errors, drastically impacting its estimation accuracy and introducing errors on the order of $10^4$. (4)After opening the hybrid estimation knob, ICE can reach the lossless data distribution and use the index execution to avoid inaccurate estimation and bound the Q-error.\looseness=-1


\textbf{Comparison on estimation latency.} In terms of inference latency, the ICE's inference overhead is also low enough, as shown in Table~\ref{Tab.supertab}, ICE's inference speed, on average, is nearly twice as fast as Naru's. This is because the major part of its time consumption, sampling, does not require as much computational resources as deep AR models such as Naru. For each sampling point, only twice point queries are needed, one for Key2Rank and another for Rank2Key. The average cost of a single sampling is 2 microseconds, which can be effectively calculated even on a CPU. In contrast, deep models require expensive GPU resources, utilizing thousands of computing cores in GPUs to parallelize the inference of model parameters, resulting in extremely significant computational overhead.\looseness=-1

\begin{table}[htbp]
\caption{  Q-errors on different workload drift scenes  }
\label{tab.Robust}
\begin{tabular}{l|llll|llll}
\multirow{2}{*}{{Method}} & \multicolumn{4}{c|}{{DataDrift}} & \multicolumn{4}{c}{{QueryDrift}}  \\ \cline{2-9} 
                                     & 50     & 95       & 99       & MAX      & 50      & 95       & 99       & MAX      \\ \hline
ALECE                                &   3.65  & 54.1  & 127 & 201 & 9.27 & 137 & 634 & 668\\
MSCN                                 & 8.04 & 668 & $1e^3$ & $2e^3$ & 14.1 & 352 & $3e^3$ & $5e^3$ \\
Naru                                 & 1.11 & 2.33   & 3.98     & 5        & 1.12  & 2.06   & 2.89   & 3.33   \\
ICE                                 & \textbf{1} & \textbf{1.05}   & \textbf{1.12}     & \textbf{2.46}        & \textbf{1.02}  & \textbf{1.27}   & \textbf{1.33}   & \textbf{1.83}   \\
\end{tabular}
\end{table}

\textbf{Robustness on out-of-distribution queries.} In Table~\ref{tab.Robust}, we make two types of \uline{O}ut-\uline{O}f-\uline{D}istribution (OOD) workloads in static scenarios of the Power dataset to test the robustness of the cardinality estimators, i.e., data drift and query drift workloads.  Specifically, for the data drift workload, we sort the data by the first column, making the training queries focus on the first $50\%$ of the original data distribution, while the test queries centered on the last $50\%$.  As for the query drift workload, we concentrate the predicates of the training set mainly on the first five columns, {whereas the test set's predicates are mainly in the last column.}  We find that, on these OOD testing workloads, all data-driven methods significantly outperform the query-driven methods.  Although ALECE can utilize coarse-grained information such as histograms to mitigate these issues, it is powerless against rare predicates on unseen columns.  In contrast, data-driven cardinality estimators do not face such dilemmas, achieving over two orders of magnitude higher accuracy measured by Q-Max than query-driven models.  Furthermore, ICE achieved the best performance in both workloads.\looseness=-1

\subsection{Construction and Updating Evaluation}

\label{Sec.ScaEva}
\begin{figure}[htbp]
	\subfigure[ Update time per batch ]{
		\label{Fig.BUT}
		\includegraphics[width=0.4\columnwidth]{./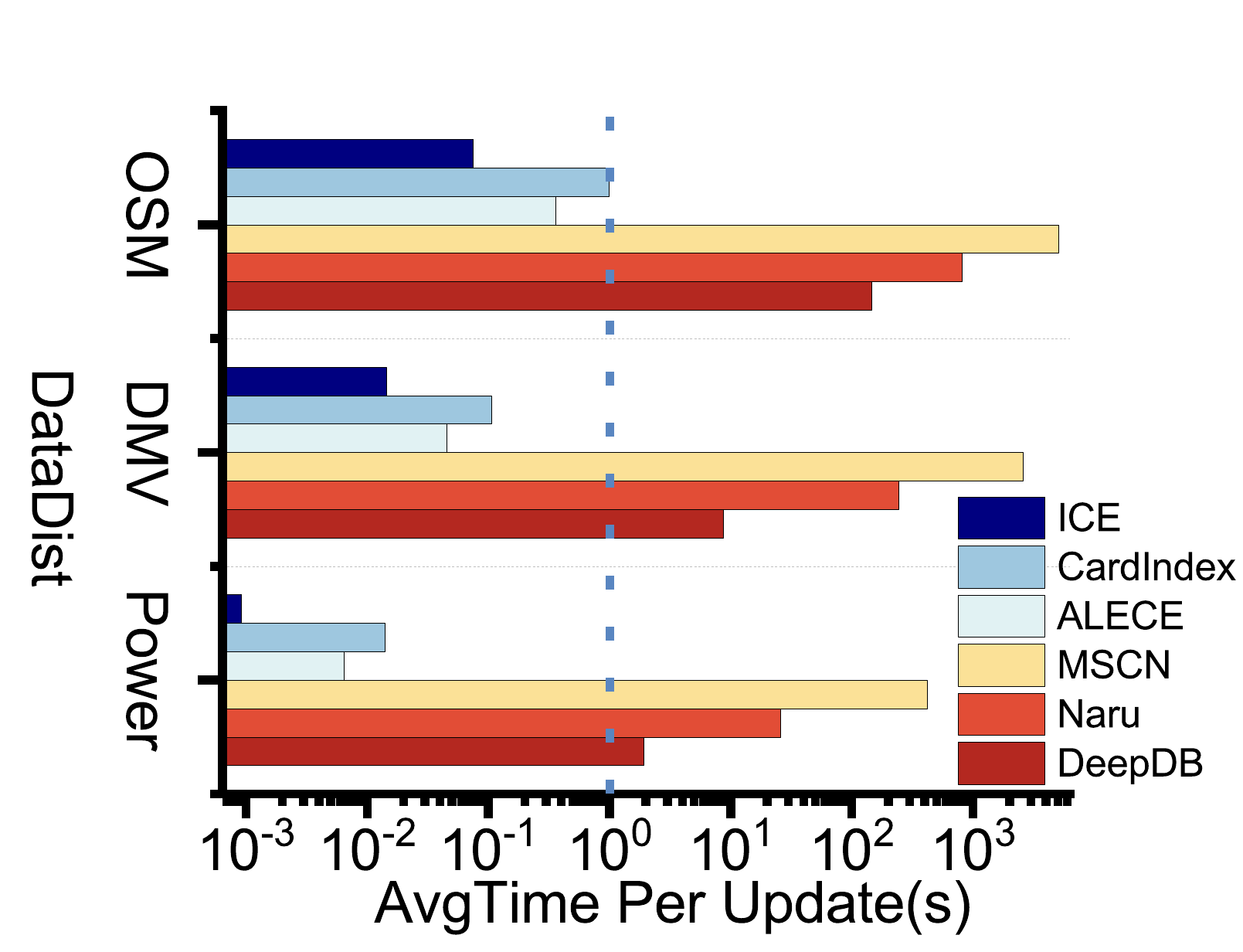}
	}
	\subfigure[ Latency breakdown ]{
		\label{Fig.ULB}
		\includegraphics[width=0.4\columnwidth]{./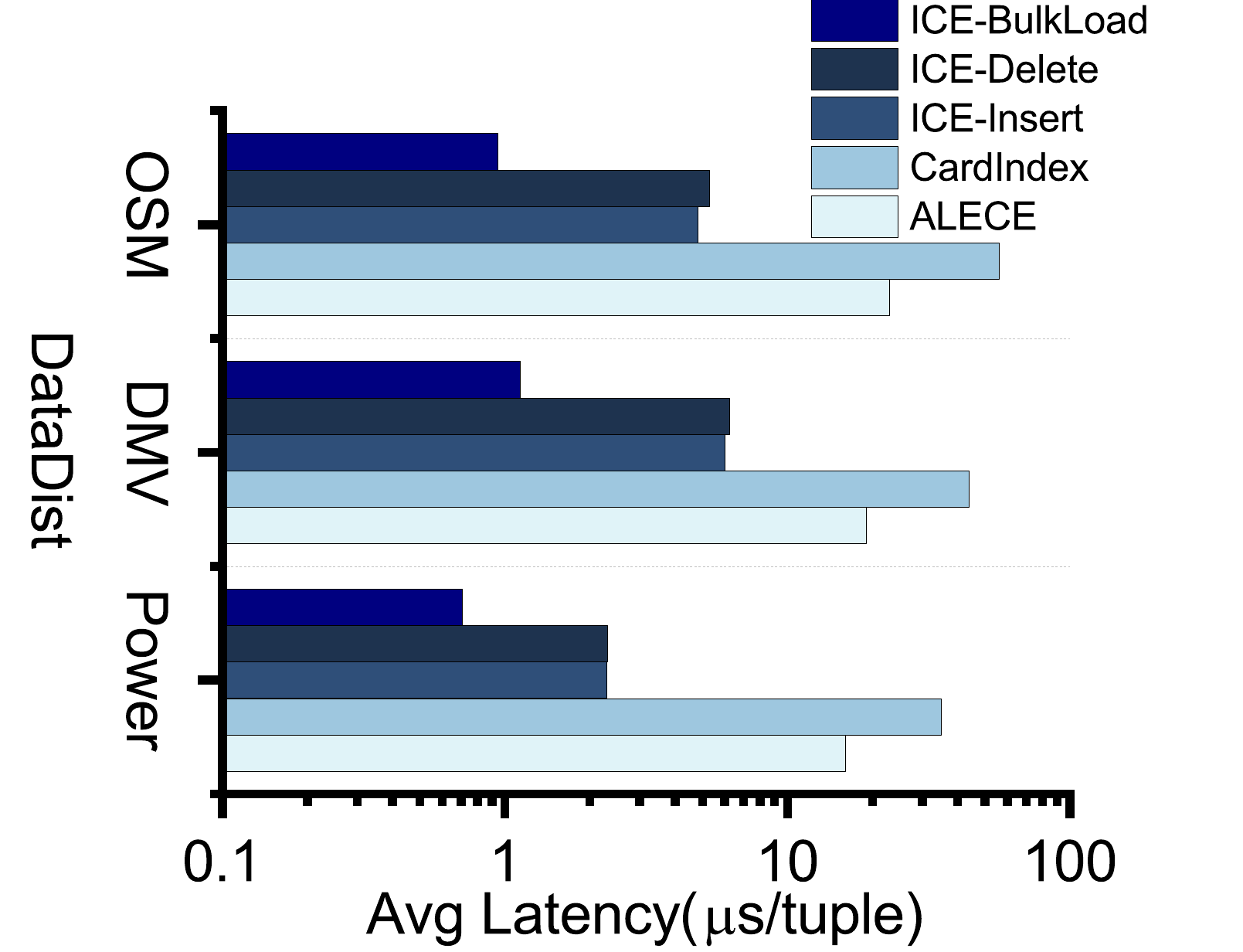  }
	}
	\caption{ Results on batched updating time and its breakdown  }
	\label{Exp03.Var}
	\vspace{-1.0em}    
\end{figure}

\textbf{Discussions on updating time.} In Figure~\ref{Fig.BUT}, we report the time required for the model to perform a single update batch. We separately select the methods that require less than 1s per update batch, decompose the batched update time, and report it separately in Figure~\ref{Fig.ULB}. We find that ICE  also achieves the fastest update speed compared to both query-driven and data-driven methods. Specifically, it is 5-6 times faster than the fastest data-driven method, CardIndex, and 2 times faster than the fastest query-driven method, ALECE. Compared to CardIndex, ICE achieves rapid updates by completely replacing the neural network with an index structure, resulting in significant advantages in parameter storage and updating. When inserting or modifying a single tuple, we only need to modify a local model parameter of $log(N)$ scale, avoiding a full update of the model parameters. Meanwhile, ALECE inefficiently maintains excessive status information when updating statistical information, leading to a non-negligible constant time overhead. At the same time, we also record the average time that bulk-loading takes to load a single tuple. We observe that "ICE-Bulkload" is about one time faster than "ICE-Insert". This is because the bulk-loading of the model does not need to adjust the tree's structure frequently, which brings lower maintenance overhead. Therefore, when rapid cold start is required, users can directly use the bulk-loading algorithm to quickly build the ICE model.\looseness=-1


\textbf{Discussions on training time and space consumption.} We report the models' training time and space consumption in Figure~\ref{Fig.TT} and Figure~\ref{Fig.MS}. We single out the time taken by the query-driven model to obtain the true cardinality label as "Label". We find out that in terms of training time, ICE achieves the fastest model training. Depending on whether the data is ordered, in the case of ordered data, ICE can be up to three orders of magnitude faster than the fastest existing model in terms of construction speed. In the case of unordered data, ICE can still be up to 40 times faster than existing models in terms of construction speed. Of course, since the distribution of the original data needs to be preserved as completely as possible, the index requires much space for maintenance, and both ICE and CardIndex occupy considerable space. In summary, the ICE model effectively trades space for accuracy and time (CE  accuracy, training time, update time, and inference latency). Considering that compared to expensive GPU resources and the limited time budget for query optimization, the memory in cloud databases is relatively cheap~\cite{sebastian2020memory,lahiri2015oraclememory}. Also, maintaining an additional index in memory for individual frequently queried tables can even accelerate the entire query processing speed. Therefore, the above trade-off of space for time and accuracy is actually acceptable in real-world scenarios.\looseness=-1

\begin{figure}[t]
	\centering
	\subfigure[ Training time]{
		\label{Fig.TT}
		\includegraphics[width=0.4\columnwidth]{./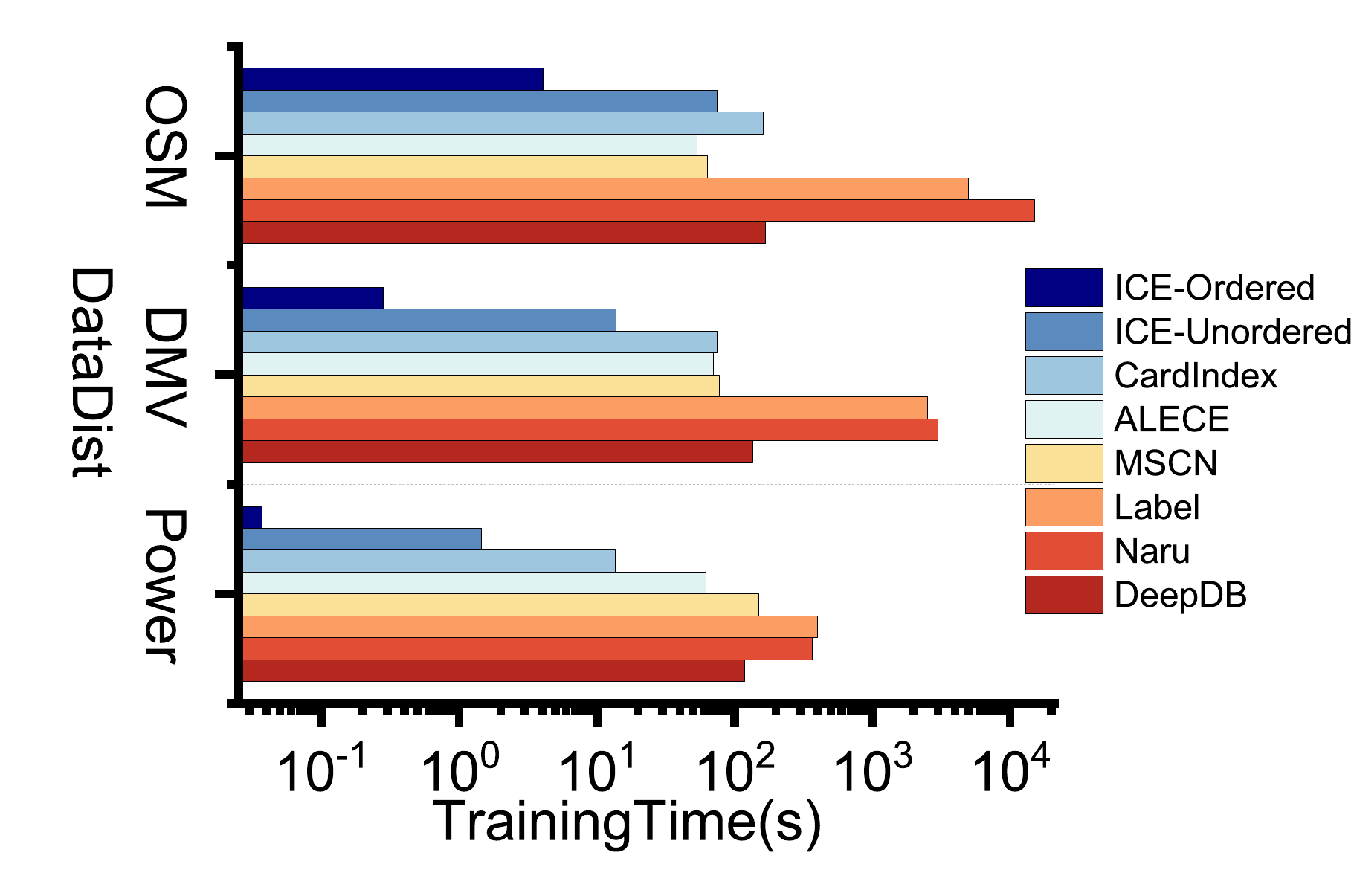  }
	}
	\subfigure[ Model size ]{
		\label{Fig.MS}
		\includegraphics[width=0.4\columnwidth]{./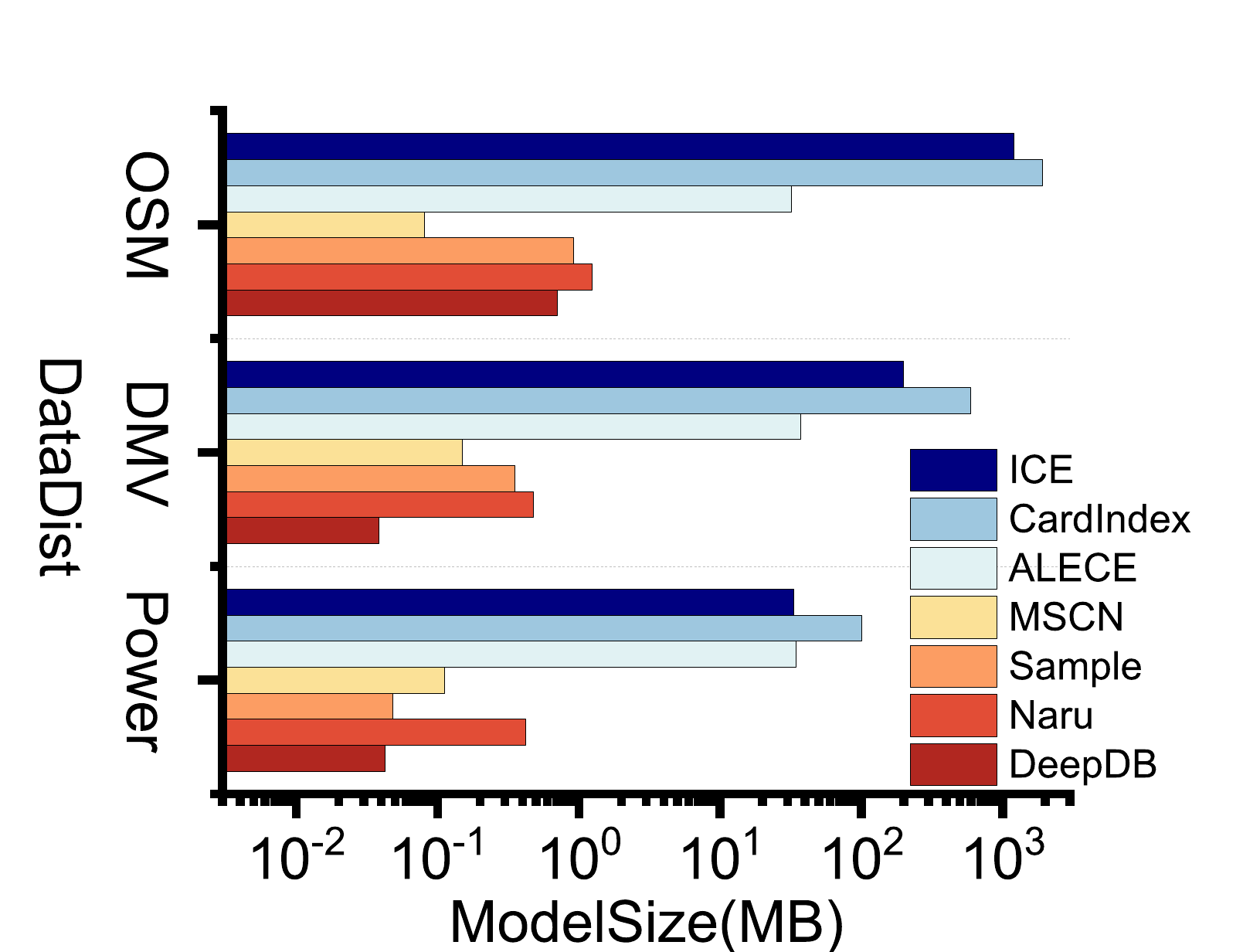}
	}
	\caption{ Results on training time and model size  }
	\label{Exp03.Var}
	\vspace{-1.0em}    
\end{figure}

\subsection{Variance Evaluation}
\label{Sec.VarEva}
In this section, we alter the parameters of our method, including search depth, sampling budget, selection strategy of the split point, Q-error bound, and confidence. We investigate their impact on the estimation accuracy and efficiency of ICE. We test on the static workload of the Power dataset. To better explore the influence of parameters on the estimation performance, we turn off the knob for hybrid estimation on the evaluation of the search depth, sampling budget, and split point selection.\looseness=-1

\begin{figure}[htbp]
	\centering
	\subfigure[  Q-errors on recursive depth]{
		\label{Fig.GRecQerror}
		\includegraphics[width=0.4\columnwidth]{./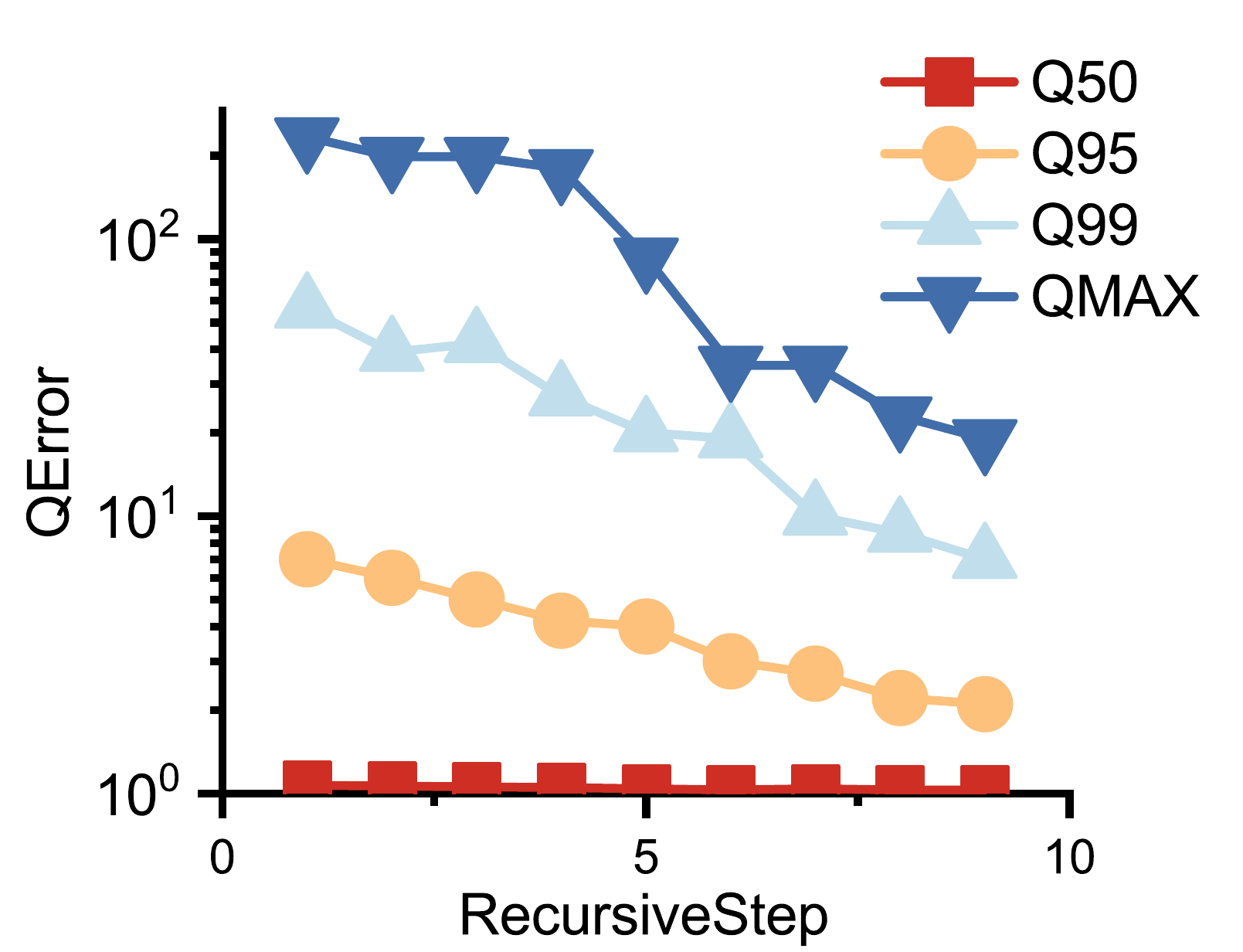  }
	}
	\subfigure[ Efficiency on recursive depth]{
		\label{Fig.GRecQMean}
		\includegraphics[width=0.4\columnwidth]{./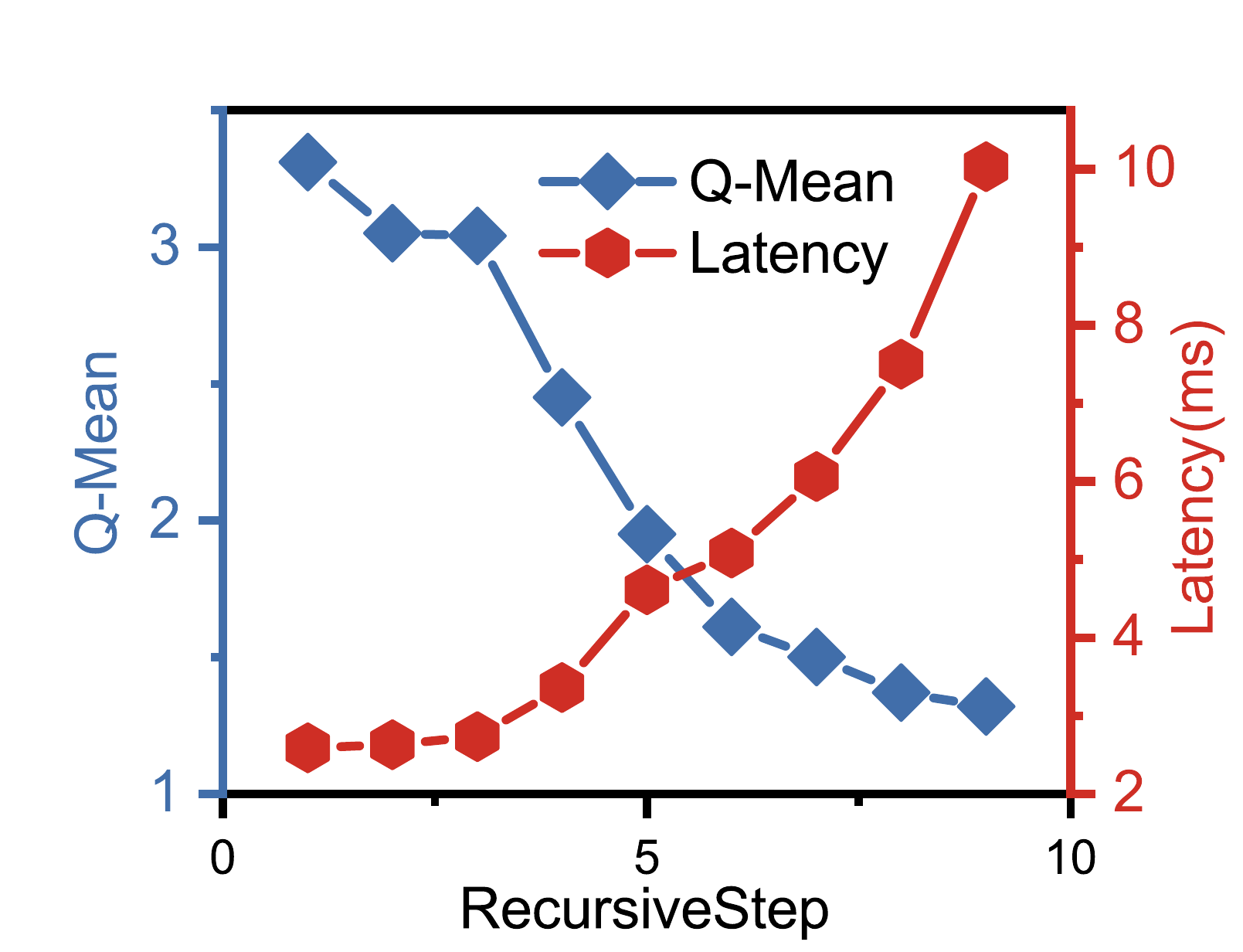}
	}
	\subfigure[  Q-errors on sample number]{
		\label{Fig.GSamQerror}
		\includegraphics[width=0.4\columnwidth]{./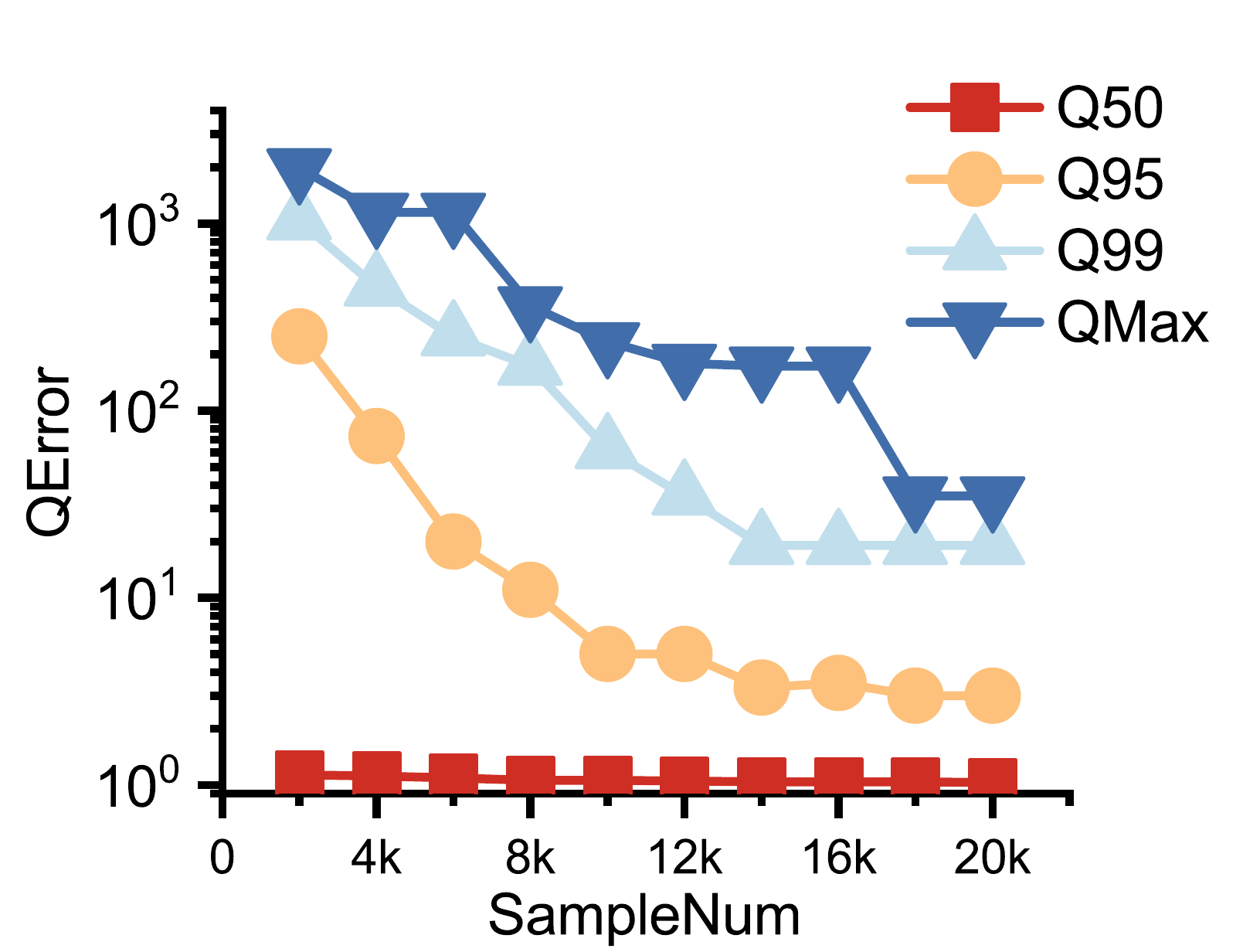  }
	}
	\subfigure[   Efficiency on sample number]{
		\label{Fig.GSamQLantency}
		\includegraphics[width=0.4\columnwidth]{./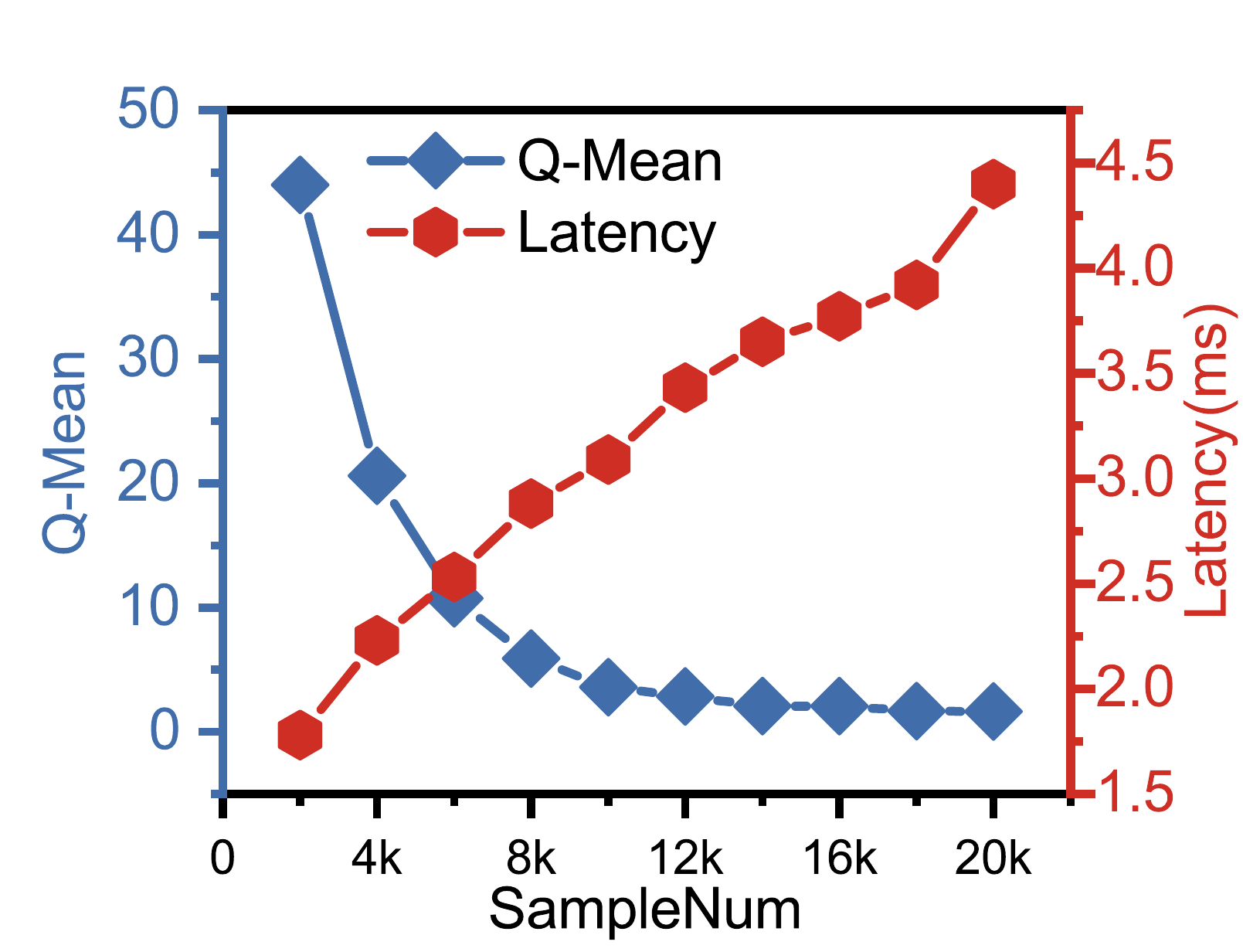}
	}
	\caption{Evaluation of sample number and recursive depth  }
	\label{Exp03.Var}
	\vspace{-1.0em}    
\end{figure}

\textbf{Effect on recursive depth and sampling number.} We explore how recursive depth and sampling budget affect the estimation quality and efficiency.   For the exploration of recursive depth, we fix the sampling budget at $20k$.  The results are in Figure~\ref{Fig.GRecQerror} and Figure~\ref{Fig.GRecQMean}. We can observe a deeper recursive partitioning depth can better filter the sampling space, leading to a lower Q-error. Specifically, when the depth increases from 1 to 9, the maximum Q-error decreases from 231 to 19, representing a reduction by an order of magnitude. {And for the evaluation of the sampling budget}, we fix the recursive depth as $6$. We report the results in Figure~\ref{Fig.GSamQerror} and Figure~\ref{Fig.GSamQLantency}. We can find out that with the increase in the number of samples, the estimation accuracy is also improved by two orders of magnitude. The reason for both experimental observations lies in the fact that, as we analyzed in Section~\ref{Sec.Analysis}, ICE's estimation variance is inversely proportional to the sampling budget $b$ and the filtering efficiency $\eta$ of the index. A deeper depth represents a stronger ability of the index to filter irrelevant areas, resulting in a higher filtering efficiency $\eta$. Higher index filtering efficiency and more sampling budgets lead to lower sampling variance and higher accuracy of estimation.\looseness=-1

\textbf{Effect on split point selection.} In Table~\ref{TABJ}, we investigate the impact of different split point selection strategies on the model's estimation quality while maintaining a sampling budget of $20k$. Our findings indicate that both sampling strategies can effectively filter the multidimensional query space. Moreover, when the recursion depth is shallow, the "Optimal 1 Split" strategy more efficiently filters a larger portion of the query space. However, as the recursion depth increases, selecting the central point strategy achieves a higher filtering efficiency. This can be attributed to the limited selection space for split points in the "Optimal 1 Split" strategy, coupled with the early termination strategy in~\cite{lmsfc}, which hinders finer-grained partitioning. Consequently, this split strategy tends to lead the filtering process into a local optimum, resulting in good efficiency only at shallow filtering depths but struggling to maintain high efficiency at deeper levels. Ultimately, this limits the improvement in estimation accuracy.\looseness=-1

\begin{table}[htbp]
\caption{Q-errors on different split strategies with different recursive depth }
\label{TABJ}
\centering
\begin{tabular}{|c|c|cccc|}
\hline
  Method&Depth       & 50th         & 95th & 99th &  MAX                                                  \\ \hline
Opt 1 Split&3    & 1.04   & 4.3 & 48& 83\\ \hline
Middle&3    & 1.06   & 5 & 42& 198\\ \hline
Opt 1 Split&6    & 1.04   & 3.7 & 26& 83\\ \hline
Middle&6    & \textbf{1.03}   & \textbf{3} & \textbf{19}& \textbf{35}\\ \hline
\end{tabular}

\end{table}

\begin{figure}[htbp]
	\vspace{-1.5em}    
	\centering
	\subfigure[  Error bound analysis]{
		\label{Fig.EBA}
		\includegraphics[width=0.4\columnwidth]{./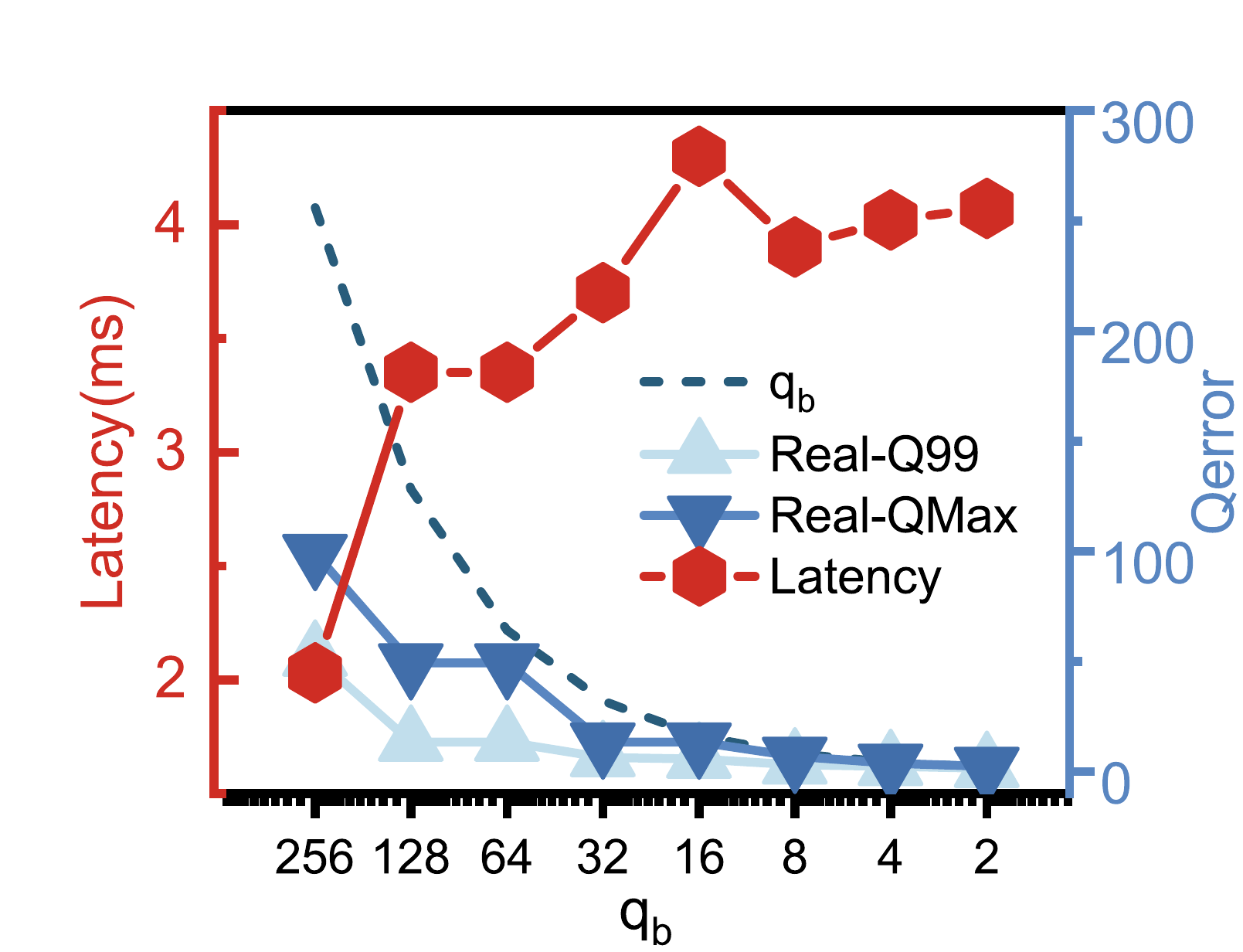  }
	}
	\subfigure[  Confidence analysis ]{
		\label{Fig.CA}
		\includegraphics[width=0.4\columnwidth]{./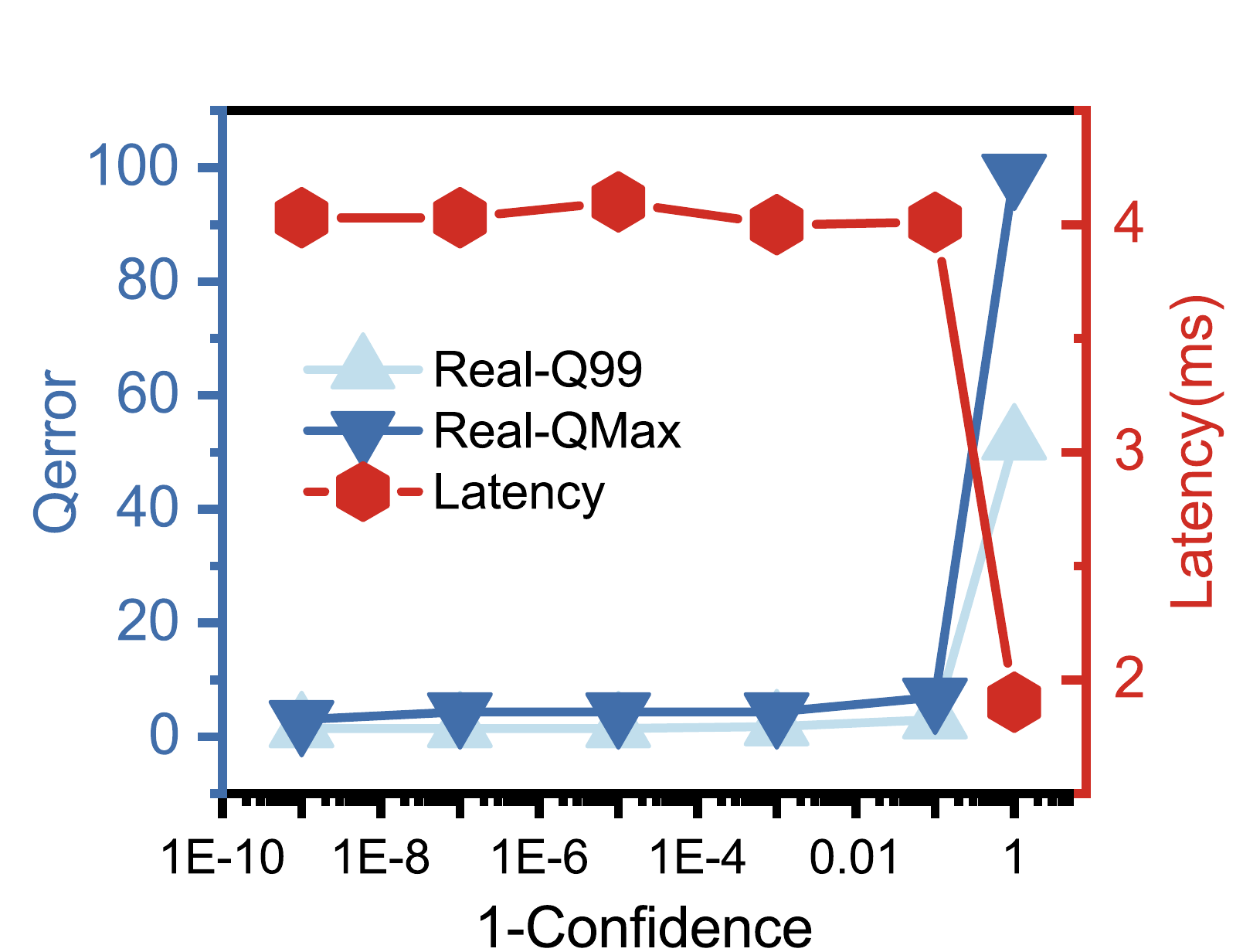}
	}
	\caption{Evaluation on the Q-error bounding}
        \label{Exp03.Var}
	\vspace{-1.5em}    
\end{figure}


\textbf{Effect on error bound and confidence.} We explore how the input Q-error bound and confidence level affect estimation quality and latency. Here, we enable hybrid estimation and conduct experiments on a small-cardinality static workload on the Power dataset, with the number of samples set to 2000. The confidence level in Figure~\ref{Fig.EBA} is $10^{-7}$, and the bound in Figure~\ref{Fig.CA} is set to 2. From Figure~\ref{Fig.EBA} and  Figure~\ref{Fig.CA}, we find that the preset bound can well bound the maximum Q-error and the 99th percentile Q-error. As $q_b$ decreases and the confidence level increases, more and more low-cardinality queries are handed over to the index for execution, resulting in more accurate estimates and relatively longer latency.\looseness=-1

\section{Conclusions and Future Work}

In this work, we designed ICE, a fully index-based cardinality estimation model, which addresses the slow update/build speed of data-driven cardinality estimators. We introduce a robust alternative approach for CE  in dynamic scenarios, where the estimation remains highly accurate without being disturbed by the drift of testing queries. Our future work will focus on two aspects: (1) Optimize index efficiency using historical workloads. According to variance analysis in Section~\ref{Sec.Analysis}, it can improve sampling efficiency and estimation quality; (2) Compress the index to reduce the model parameter scale and memory overhead.

\bibliographystyle{ACM-Reference-Format}
\bibliography{reference}

\end{document}